\documentclass[3p]{elsarticle}
\usepackage{amsmath}
\usepackage{amssymb}
\usepackage{bm}
\usepackage[mathscr]{eucal}
\usepackage{theorem}
\usepackage{graphicx}
\usepackage{psfrag}
\usepackage{subfigure}
\usepackage{color}
\usepackage{wasysym}
\include{graphix}
\usepackage{framed}
\usepackage{enumitem}%
\usepackage{lipsum}
\usepackage{float}
\usepackage{MnSymbol}
\usepackage{url}

\newcommand{\mathe}{\mathrm{e}}
\newcommand{\mathd}{\mathrm{d}}

\newcommand{\vecx}{\bm{x}}

\newcommand{\scalegamma}{g}
\newcommand{\betaconst}{\beta_0}

\graphicspath{{figs/}}

\newtheorem{theorem}{Theorem}[section]
\newtheorem{proposition}[theorem]{Proposition}

\newenvironment{proof}[1][Proof]{\begin{trivlist}
\item[\hskip \labelsep {\bfseries #1}]}{\end{trivlist}}

\newcommand{\myqed}{\nobreak \ifvmode \relax \else
      \ifdim\lastskip<1.5em \hskip-\lastskip
      \hskip1.5em plus0em minus0.5em \fi \nobreak
      \vrule height0.75em width0.5em depth0.25em\fi}

\begin{document}
\title{A large-scale statistical study of the coarsening rate in models of Ostwald-Ripening}
\author[mymainaddress]{Lennon \'O N\'araigh\corref{mycorrespondingauthor}}
\cortext[mycorrespondingauthor]{Corresponding author}
\ead{onaraigh@maths.ucd.ie}

\author[mymainaddress]{Andrew Gloster}

\address[mymainaddress]{School of Mathematics and Statistics, University  College Dublin, Belfield, Dublin 4, Ireland}

\date{\today}

\begin{abstract}
In this article we look at the coarsening rate in two standard models of Ostwald Ripening.  Specifically, we look at a discrete droplet popoulation model, which in the limit of an infinite droplet population reduces to the classical Lifshitz--Slyozov--Wagner model.  We also look at the Cahn--Hilliard equation with constant mobility.
We define the coarsening rate as $\beta=-(t/F)(\mathd F/\mathd t)$, where $F$ is the total free energy of the system and $t$ is time.  There is a conjecture that the long-time average value of $\beta$ should not exceed $1/3$ -- this result is summarized here as $\langle \beta\rangle\leq 1/3$.  We explore this conjecture for the two considered models.  Using large-scale computational resources (speficially, GPU computing employing thousands of threads), we are able to construct ensembles of simulations and thereby build up a statistical picture of $\beta$.  Our results show that the droplet population model and the Cahn--Hilliard equation (asymmetric mixtures) are demonstrably in agreement with $\langle\beta\rangle\leq 1/3$.  The results for the Cahn--Hilliard equation in the case of symmetric mixtures show $\langle\beta\rangle$ sometimes exceeds $1/3$ in our simulations.  However, the possibility is left open for the very long-time average values of $\langle \beta\rangle$ to be bounded above by $1/3$.  The theoretical methodology laid out in this paper sets a path for future more invensive computational studies whereby this conjecture can be explored in more depth.
\end{abstract}

\maketitle

\section{Introduction}
\label{sec:intro}

In a seminal paper, Cahn and Hilliard~\cite{CH_orig} introduced their eponymous equation to model the dynamics of phase separation. They pictured a binary alloy in a mixed state, cooling below a critical temperature. This state is unstable to small perturbations so that fluctuations cause the alloy to separate into domains rich in one material or the other.  The domains grow in time in a phenomenon called coarsening. Because the evolution of the concentration field is an order-parameter equation, the Cahn--Hilliard equation gives a completely general description of phase separation, applicable not only to molten alloys but also to polymer mixtures~\cite{cabral2018spinodal}, two-phase flows~\cite{ding2007diffuse}, and nanobubbles~\cite{tomo2018unexpected}.  
Over the years, numerical simulations~\cite{spectralCahn} and scaling arguments~\cite{LennonPRE2007} have been used to establish that the typical domain size $L(t)$ expands in time as $L(t)\sim t^{1/3}$.  Kohn and Otto~\cite{kohn2002upper} have established the following rigorous upper bound on the coarsening rate in the Cahn--Hilliard equation:
\begin{equation}
\frac{1}{T}\int_0^T L^{-2}(t)\mathd t\geq \frac{K}{T}\int_0^T \left(t^{-1/3}\right)^2\mathd t,
\label{eq:kohn}
\end{equation}
where $K$ is constant.  The constant $K$ depends on the dimensionality of the space and on $\Omega$, the region occupied by the binary fluid; we take this opportunity also to introduce $|\Omega|$, the volume of the region $\Omega$.    In this context, $L(t)$ is computed as $L(t)=|\Omega|/\left[\text{Total interfacial area at time }t\right]$.  Although the bound~\eqref{eq:kohn} has been proved rigorously, it is not known if there is an instantaneous (pointwise) analogue, that is, a bound of the form 
\begin{equation}
L(t)\stackrel{\text{?}}{\leq} (\text{Const.}) t^{1/3},
\label{eq:kohn_maybe}
\end{equation}
valid for sufficiently late time $t$, where the putative constant is positive and depends only on $\Omega$ and $D$.  The aim of this article is to carry out numerical simulations to shed light on this problem: our ensembles of simulations hint at the existence of such a bound.

Before presenting the results of the simulations we review the mathematical theory of the Cahn--Hilliard equation.  This is presented here as:
\begin{equation}
\frac{\partial C}{\partial t}=\nabla^2(C^3-C-\gamma\nabla^2C),\qquad \vecx\in \Omega,\qquad t>0,
\label{eq:ch}
\end{equation}
where $\gamma$ is a small positive parameter, and $C$ is a volume fraction tracking the abundance of the different binary fluid components, with $C=\pm 1$ corresponding to the pure phases.  
Under suitable boundary conditions on $\partial\Omega$, the Cahn--Hilliard equation~\eqref{eq:ch} reduces the following free energy:
\begin{equation}
F=\int_{\Omega}\left[\tfrac{1}{2}\left(C^2-1\right)^2+\tfrac{1}{2}\gamma|\nabla C|^2\right]\mathd^D x,\qquad
\frac{\mathd F}{\mathd t}=-\int_{\Omega}\left|\nabla (C^3-C-\gamma\nabla^2C)\right|^2\mathd^D x.
\label{eq:dFdt_ch}
\end{equation}
Here, $D$ is the dimension of the space, which in our investigations, will be set equal to either $D=2$ or $D=3$, as required.
Solutions of Equation~\eqref{eq:ch} are characterized by a rapid relaxation to $C=\pm 1$ locally, in domains, followed by slow domain growth -- this evolution is driven by the energy-minimization~\eqref{eq:dFdt_ch} and the conservation law $(\mathd/\mathd t)\int_\Omega C\,\mathd^D x=0$, the latter being a further consequence of the structure of Equation~\eqref{eq:ch} and the assumed boundary conditions.

For the case of a minority phase immersed in a matrix of the majority phase, small `droplets' of the minority phase in the Cahn--Hilliard equation disappear only to be reabsorbed into larger droplets of the same phase in precisely the same process as Ostwald Ripening~\cite{baldan2002review}.    
There is a simple asymptotic theory for Ostwald Ripening, valid in an infinitely large domain with infinitely many droplets, but formulated in such a way that the volume fraction occupied by the droplets is finite (and small).  
The theory was developed by Lifshitz and Slyzov in Reference~\cite{lifshitz1961kinetics} and simultaneously, by Wagner in Reference~\cite{wagner1961} -- it is therefore referred to as the LSW theory.
 The LSW is independent of the Cahn--Hilliard equation; however, it can be shown~\cite{pego1989} that solutions of the Cahn--Hilliard equation~\eqref{eq:ch} (under certain assumptions) coincide with solutions of the LSW theory, in the limit as $\gamma\rightarrow 0$.   Because of its usefulness in making sense of  numerical simulations of the Cahn--Hilliard equation in the dilute limit, the LSW theory is reviewed here.

The main result of the LSW theory is an analytical expression for the late-time dropsize distribution in Ostwald Ripening, denoted here by $p(r,t)$.  
A continuity equation for $p(r,t)$ is derived, where the probability flux depends on the droplet velocity -- the velocity is obtained by energetic arguments based on droplet interfacial area.  
The continuity equation admits a self-similar solution $p(r,t)\propto f(x)$, where $x=r/\langle R\rangle$, and where $\langle R\rangle$ is the instantaneous value of the mean droplet radius; this gives an alternative characterization of the typical domain size.  The scaling behaviour $\langle R\rangle\propto t^{1/3}$ readily drops out of this calculation.  
It can be also noted that the functional form of $f(x)$  has compact support.  

The LSW theory represents an approximate solution to so-called Mullins--Sekerka (MS) Dynamics, introduced first by Mullins and Sekerka to model particle growth in a supersaturated matrix~\cite{mullins1963morphological}, but then re--purposed as an effective model for Ostwald Ripening more generally~\cite{niethammer2008effective}.  
The MS dynamics describe the motion of extended regions $\{B_1,\cdots,B_N\}$ in a domain $\Omega$ and are expressible in terms of a generic chemical potential $\mu$:
\begin{subequations}
\begin{eqnarray}
\nabla^2\mu&=&0,\qquad \text{in }\Omega - \cup_{i=1}^N B_i,\\
\mu&=&\kappa,\qquad \text{in } \cup_{i=1}^N \partial B_i,\\
V&=&\left[\widehat{\bm{n}}\cdot\nabla \mu\right],\qquad \text{in } \cup_{i=1}^N \partial B_i.\label{eq:ms3}
\end{eqnarray}%
\label{eq:ms}%
\end{subequations}%
Here, $\kappa$ denotes the mean interfacial curvature, $V$ denotes the normal velocity of the interface, $\widehat{\bm{n}}$ denotes the normal vector to the interface, and $\left[\widehat{\bm{n}}\cdot\nabla \mu\right]$ denotes the jump in the normal derivative of the chemical potential across the interface.  
The fact that the interfaces move (with velocity $V$, via a mismatch in the chemical potential across the interfaces), means that this is a dynamical problem.  
LSW theory amounts to a solution of Equation~\eqref{eq:ms} in the mean-field approximation, for spherical domains $B_i$.

Motivated by these discussions, in the present work we perform numerical simulations for two distinct but related models:

\noindent{\textbf{Model 1:}}  Droplet populations, where the dynamics are governed by a set of discrete equations for the droplet radii.  The initial number of droplets $N$ is taken to be large but finite. 

\noindent{\textbf{Model 2:}}  Droplet populations, where the dynamics are governed by the Cahn--Hilliard equation~\eqref{eq:ch} for a single scalar field $C(\bm{x},t)$.  The size of the region $\Omega$ in which $C$ is defined is taken to be large but finite.

\noindent In both cases, we are motivated to consider the total interfacial area $F$ as a measure of surface energy; this then gives $L(t)=|\Omega|/F$ as the typical lengthscale.  In the case where the dynamics are governed by the Cahn--Hilliard equation, $F$ can be identified with the free energy in Equation~\eqref{eq:dFdt_ch}.  We then  identify the {\textit{coarsening rate}} $\beta$:
\begin{equation}
\beta=\frac{t}{L}\frac{\mathd L}{\mathd t}=-\frac{t}{F}\frac{\mathd F}{\mathd t}
\label{eq:betadef}
\end{equation}
for both cases.  Note that $\beta$ is a property of the system, i.e. of the entire droplet population.  We look at the probability distribution function  of $\beta$; $\beta$ is viewed as a probabilistic variable
that emerges from performing an ensemble of different numerical simulations:
\begin{equation}
p_\beta(b,t)=\left[\begin{array}{c}
\text{Probability that a given simulation}\\
\text{produces a growth rate }\beta\text{ in the range }\\
b\leq \beta\leq b+\mathd b,\text{ at time }t\end{array}\right].
\label{eq:p_beta}
\end{equation}
In Case 1 the ensemble is made up of $M$ simulations, each with $N$ interacting particles present initially; all initial conditions for the particle radii are random numbers drawn from the uniform distribution.
In Case 2 the ensemble is made up of $M$ different simulations of Equation~\eqref{eq:ch}, where again, each initial condition has a random initial condition made up (in an appropriate sense) from the uniform distribution.
The structure of the resulting probability distribution functions will give some clue if $L(t)\leq (\text{Const.}) t^{1/3}$ holds pointwise, or only in an averaged sense, as in Equation~\eqref{eq:kohn}.  More concretely, the plan of the paper is to answer the following questions:
%

\noindent{\textbf{Question 1:}}  Is it sensible even to define a probability distribution function for $\beta$ in Equation~\eqref{eq:betadef}? 
This question is answered in the affirmative by reference to LSW theory; this corresponds to Model 1 with $N\rightarrow\infty$ . In this instance, the distribution of $\beta$ can be computed analytically.  These results are established in Section~\ref{sec:LSW}.  This then justifies the formulation of analogous probability distribution functions for Model 1 and Model 2.

\noindent{\textbf{Question 2:}}   What is  the probability distribution function for $\beta$ in Model 1?  How do finite-size effects (parametrized by $N$, the number of droplets initially present) alter the shape of the distribution?  These questions are answered in Section~\ref{sec:drop_pop}, where we compute $p_{\beta}^N(b,t)$ via numerical simulation -- i.e. the probability distribution function in Equation~\eqref{eq:p_beta}, with the finite-size effect accounted for.

\noindent{\textbf{Question 3:}}  What is the probability distribution function for $\beta$ in Model 2?  How do finite-size effects (parametrized by $|\Omega|$) alter the shape of the distribution?  Can the distributions for Model 1 and 2 be compared? These questions are answered in Section~\ref{sec:CH}, where we introduce the notation  $p_{\beta}^\Omega(b,t)$ for  the probability distribution function in Model 2, with finite-size effects properly accounted for.  In particular, $p_{\beta}^\Omega(b,t)$ is built up from an ensemble consisting of $M$ numerical simulations of Equation~\eqref{eq:ch}, each with random initial conditions. The prospect of generating such ensembles is made computationally feasible using Graphics Processing Units (GPUs), which we describe in detail. 
\vspace{0.1in}

\noindent By answering these three questions, we demonstrate numerically that the probability distribution functions 
$p_{N}^\Omega(b,t)$ and $p_{\beta}^\Omega(b,t)$ are not self-similar -- the moments of the probability distribution functions vary systematically over time.  As such, we are led to consider a stochastic model for $\beta$,
\begin{equation}
\beta=\betaconst+ \delta\beta,
\label{eq:beta_model}
\end{equation}
where $\betaconst$ is a constant, and where $\delta\beta$ is a piecewise-continuous function of time with jump discontinuities occurring at random times as the system evolves.
Hence, by integrating Equation~\eqref{eq:betadef} with respect to time, from $t_0$ to $t$, we are led to:
\begin{equation}
F(t)=F(t_0)(t/t_0)^{-\betaconst}\mathe^{-\int_{t_0}^t (\delta\beta/t)\mathd t},
\label{eq:beta_model1}
\end{equation}
Since $\delta\beta$ is piecewise-continuous (with jumps at well-spaced random intervals), the integral in Equation~\eqref{eq:beta_model1} can be interpreted as an ordinary Riemann integral.
Taking expectation values with respect to the measure induced by the random jumps in $\delta\beta$, we obtain:
\begin{equation}
\mathbb{E}\bigg\{\left[\log\left(\frac{F(t_0)t_0^{\betaconst}}{F(t)t^{\betaconst}}\right)\right]^2\bigg\} \leq (t-t_0)\int_{t_0}^{t}\frac{\mathbb{E}(\delta\beta^2)}{t^2}\mathd t,\qquad t>t_0.
\label{eq:beta_model2}
\end{equation}
Our simulation results indicate that $\mathbb{E}(\delta\beta^2)=k t$, where $k$ is a constant,  hence we are led to propose the following bound (in the mean-square sense) for the evolution of the free energy $F$:
\begin{equation}
\mathbb{E}\bigg\{\left[\log\left(\frac{F(t_0)t_0^{\betaconst}}{F(t)t^{\betaconst}}\right)\right]^2\bigg\} \leq k(t-t_0)\ln(t/t_0).
\label{eq:beta_model3}
\end{equation}
Therefore, in this article, the values of $\betaconst$ and $k$ (as determined by numerical simulation) are key.  Specifically,
if it can be shown that $\beta_0\leq 1/3$, and that $k$  tends to zero (in some appropriate sense), then the pointwise bound~\eqref{eq:kohn_maybe} will hold, almost surely.

These questions provide the layout for the paper.  However, we also address other related issues: in Section~\ref{sec:CH} we look not only at asymmetric mixtures and the Cahn--Hilliard equation (which nicely maps on to LSW theory), but also at symmetric mixtures.  Finally, concluding remarks are  presented in Section~\ref{sec:conc}.

\section{LSW Theory Revisited}
\label{sec:LSW}

In this section we revisit LSW theory.  We first of all re-derive the standard results for completeness.  These are: the evolution equation for the radius of an individual droplet, and the self-similar dropsize distribution function, valid as $N\rightarrow \infty$; here $N$ denotes the number of droplets initially present in the system.  Re-deriving these standard results enables us to derive expressions for:
\begin{itemize}[noitemsep]

\item The probability distribution function $p_\alpha(a,t)$ for the growth rate $\alpha_i=(t/R_i)(\mathd R_i/\mathd t)$ of an individual droplet; the probability distribution function is valid in the limit as $N\rightarrow\infty$.  The existence of an analytical formula for $p_\alpha(a,t)$ establishes that it is legitimate to consider not only droplet radii as a random variable, but also, the corresponding droplet growth rates.

\item An expression for the free energy $F$ of the droplet population, and hence, a thorough understanding (for Model 1) of the quantity $\beta=-(t/F)(\mathd F/\mathd t)$.
\item An expression for the probability distribution function of $p_\beta^N(b,t)$, valid in the limit as $N\rightarrow \infty$.
\end{itemize}

\subsection{Review of standard LSW Theory}

The starting-point for LSW Theory is the mean-field solution of the Mullins--Sekerka dynamics~\eqref{eq:ms}, valid for the case of a very dilute droplet population: here, the aim is to find a highly simplified expression for the chemical potential, which will be constant (in space) in the far field, and encode the effect of all other droplets on a particular droplet $B_i$ (with $i\in\{1,\cdots,N\}$).  
As such, we solve for $\mu$ with the following constraints: 
\begin{equation}
\mu\,\begin{cases} \text{is harmonic}, & \text{for } R_i\neq |\vecx-\vecx_i|\ll d,\\
                   =1/R_i,             & |\vecx-\vecx_i|=R_i,\\
                                     \approx  \overline{u}, & \text{for}R_i\ll |\vecx-\vecx_i|\ll d.
                                    \end{cases}
\label{eq:bc}
\end{equation}
Here, $d$ is the typical distance between droplets; this is assumed to be large in comparison with $R_i$ -- this assumption is valid in the limit of very dilute systems.  
The fundamental solution is therefore given by
\[
\mu=\frac{a}{|\vecx-\vecx_i|}+b,
\]
where $a$ and $b$ are constants of integration.  
These are chosen so as to satisfy the boundary conditions~\eqref{eq:bc}, hence
\[
\mu=\frac{1}{|\vecx-\vecx_i|}R_i\left(\frac{1}{R_i}-\overline{u}\right)+\overline{u}.
\]
We also compute 
\[
\left(\frac{\partial \mu}{\partial r}\right)_{|\vecx-\vecx_i|=R_i}=-\frac{1}{R_i}\left(\frac{1}{R_i}-\overline{u}\right).
\]
Using~\eqref{eq:ms3} for spheres, this becomes
\begin{equation}
\frac{\mathd R_i}{\mathd t}=H(R_i)\left(-\frac{1}{R_i^2}+\frac{\overline{u}}{R_i}\right).
\label{eq:dRi}
\end{equation}
The Heaviside step function $H(R_i)$ is added as a pre-factor in Equation~\eqref{eq:dRi} -- this is both a regularization and a book-keeping procedure to take account of droplets whose radius shrinks to zero.
For the avoidance of doubt, we recall that $H(R_i)=0$ if $R_i\leq 0$ and $H(R_i)=1$ otherwise.

The value of the mean field $\overline{u}$ can now be obtained by imposing the constancy of the mass fraction, hence, the constancy of the total volume $\sum_{i=1}^N (4/3)\pi R_i^3$, hence
\begin{equation}
\overline{u}=\frac{\sum_{i=1}^N H(R_i)}{\sum_{i=1}^N H(R_i)R_i}.
\label{eq:uval}
\end{equation}

Next, the LSW theory is introduced in concrete terms as the limiting case of the mean-field MS theory where the number of droplets $N$ goes to infinity, while at the same time, the volume fraction
\begin{equation}
\epsilon=\lim_{\substack{ N\rightarrow\infty\\|\Omega|\rightarrow\infty}}\left(\frac{\sum_{i=1}^N (4/3)\pi R_i^3}{|\Omega|}\right)
\label{eq:Ninf}
\end{equation}
remains finite.  
The number density of droplets $P(r,t)$ is introduced, such that $P(r,t)\mathd V$ is the number of droplets in a small region of space $\mathd V=r^2\mathd r\,\mathd \omega_D$  (here, $\mathd\omega_D$ is the differential solid-angle element in $D$ dimensions).
Using standard conservation-type arguments, the evolution equation for $P(r,t)$ is just
\[
\frac{\partial P}{\partial t}+\nabla\cdot\left(\bm{v} P\right)=0,
\]
where $\bm{v}$ is the velocity of one of the droplets.  
The problem is radially symmetric, hence only the radial velocity is required.  
This is known from Equation~\eqref{eq:dRi}, hence
\[
v_r(r,t)=-\frac{1}{r^2}+\frac{\overline{u}(t)}{r}.
\]
Using the expression for divergence in spherical polar coordinates, for a radially-symmetric configuration, the evolution equation for $P$ becomes:
\[
\frac{\partial P}{\partial t}+\frac{1}{r^2}\frac{\partial }{\partial r}\left(r^2 v_r P\right)=0.
\]
If we define
\[
P(r,t)r^2=p(r,t), \text{ such that }\int_0^\infty P(r,t)r^2\mathd r=\int_0^\infty p(r)\mathd r,
\]
then the required evolution equation is
\begin{equation}
\frac{\partial p}{\partial t}+\frac{\partial }{\partial r}\left( v_r p\right),\qquad v_r(r,t)=-\frac{1}{r^2}+\frac{\overline{u}}{r}.
\label{eq:fp}
\end{equation}
Hence, 
\begin{equation}
p(r,t)=\text{Numer of droplets with radius between }r\text{ and }r+\mathd r, \text{ at time }t.
\label{eq:p_def}
\end{equation}
%
In analogy with Equation~\eqref{eq:uval}, Equation~\eqref{eq:fp} is closed by requiring:
\[
\overline{u}=\frac{\int_0^\infty p(r,t)\mathd r}{\int_0^\infty rp(r,t)\mathd r}.
\]
We now seek a similarity solution of Equation~\eqref{eq:fp}.  We write
\begin{equation}
p=t^a f(x),\qquad x=\frac{r}{ct^b}.
\label{eq:sim}
\end{equation}
We fix $a$ in the first instance.  We use the fact that the volume fraction $\epsilon$ is constant, to compute 
\begin{eqnarray*}
\epsilon&=&\frac{1}{|\Omega|}\iiint_{\Omega}(4\pi/3)\pi r^3 P(r)\mathd V,\\
        &=&\frac{1}{|\Omega|}\iiint_{\Omega}(4\pi/3)\pi r^3 p(r)\,\mathd r\,\mathd\omega_D,\\
                &=&\tfrac{1}{3}|\Omega|^{-1}(4\pi)^2\int_0^{R_{\mathrm{max}}} r^3 p(r)\mathd r.
\end{eqnarray*}
Here, $R_\mathrm{max}$ is a notional cutoff, with $R_{\mathrm{max}}\rightarrow\infty$ along with $|\Omega|\rightarrow\infty$, in such a way that $\epsilon$ remains finite.
Hence,
\[
\epsilon=\tfrac{1}{3}|\Omega|^{-1}(4\pi)^2 c^4 t^{a+4b}\int_0^{x_{\mathrm{max}}} x^3 f(x)\mathd x,
\]
where again, $x_\mathrm{max}\rightarrow\infty$ is a notional cut--off, chosen such that $\epsilon$ remains finite as $|\Omega|\rightarrow\infty$.
Thus, in order for $\epsilon$ to remain constant, it is required that $a=-4b$.
We now substitute the similarity solution~\eqref{eq:sim} into Equation~\eqref{eq:fp}.  After manipulations, we obtain:
\begin{equation}
-\tfrac{1}{3}c^3\left[3f(x)+ f'(x)x\right]+\frac{\partial}{\partial x}\left[\left(-\frac{1}{x^2}+\frac{\hat{u}}{x}\right)f\right]=0,\qquad
\hat{u}=\frac{\int_0^\infty f(x)\mathd x}{\int_0^\infty xf(x)\mathd x}.
\label{eq:ls_sim}
\end{equation}
Following convention, we write $(1/3)c^3=\scalegamma$, to give:
\begin{equation}
-\gamma\left[3f(x)+ f'(x)x\right]+\frac{\partial}{\partial x}\left[\left(-\frac{1}{x^2}+\frac{\hat{u}}{x}\right)f\right]=0,\qquad
\hat{u}=\frac{\int_0^\infty f(x)\mathd x}{\int_0^\infty xf(x)\mathd x}.
\label{eq:ls_sim1}
\end{equation}
This can then be integrated to give~\cite{bray2002theory}:
\begin{equation}
\ln [f(x)]=\int^x \frac{\mathd y}{y}\frac{2-y-3\scalegamma y^3}{\scalegamma y^3-y+1}.
\label{eq:ls_sim2}
\end{equation}
Equation~\eqref{eq:ls_sim1} gives a family of potential solutions, all parametrized by $\scalegamma$.  
The equation is also potentially without a normalizable solution with $f(x)\rightarrow 0$ as $x\rightarrow \infty$.  
These problems are solved by imposing two conditions on Equation~\eqref{eq:ls_sim2}:
\begin{itemize}[noitemsep,topsep=0.5pt]
\item To ensure normalizability, the solution $f(x)$ should have compact support;
\item The value $\scalegamma=4/27$ must be selected.
\end{itemize}
The rationale for the second condition is related to the fixed points of Equation~\eqref{eq:ls_sim} and was determined by Lifshitz and Slyzov~\cite{lifshitz1961kinetics} and summarized by Bray~\cite{bray2002theory}.
As such, the following solution for $f(x)$ is found in three dimensions ($D=3$), by integration of Equation~\eqref{eq:ls_sim2}:
\begin{equation}
f(x)=\begin{cases} \text{Const.}\times x^2(3+x)^{-1-4D/9}\left(\tfrac{3}{2}-x\right)^{-2-5D/9}\exp\left(-\frac{D}{3-2x}\right),& 0\leq x<(3/2),\\
0,&\text{otherwise}.
\end{cases}
\label{eq:fx_final}
\end{equation}
In LSW theory, the expected mean radius is computed as follows:
%
%
\begin{equation}
\langle R\rangle 
= \frac{\int_0^\infty r p(r)\,\mathd r}{\int_0^\infty p(r)\,\mathd r}
= c t^{1/3}\frac{\int_0^\infty x f(x)\,\mathd x}{\int_0^\infty f(x)\,\mathd x}
=(3\scalegamma t)^{1/3},
\label{eq:R_av}
\end{equation}
where the last equation follows since $c=(3\scalegamma)^{1/3}$ and since the distribution $f(x)$ in Equation~\eqref{eq:fx_final} has the property
$\left[\int x f(x)\mathd x\right]/\left[\int f(x)\mathd x\right]=1$.  We note also that
\begin{equation}
\overline{u}=1/\langle R\rangle, \text{hence }\overline{u}=(3\scalegamma t)^{-1/3}.
\label{eq:u_av}
\end{equation}


\subsection{Convergence to the self-similar distribution function}

It is not guaranteed that an arbitrary initial configuration of droplets will enable the dropsize distribution function $p(r,t)$ to the self-similar form in Equation~\eqref{eq:fx_final}.
It is intuitively obvious that if such convergence is to be achieved, the initial condition for $p(r,t)$ should have compact support.
Further criteria on the initial non-self-similar distribution $p(x,t=0)$ are required.  These are the so-called weak selection rules~\cite{giron1998weak}.
 If the weak selection rules for the initial condition are not satisfied, the late-time behaviour may become non-self-similar (e.g. Reference~\cite{niethammer1999non}).  An overview of this problem is also provided in Reference~\cite{mielke2006analysis}.  

\subsection{LSW theory -- Droplet growth rates in the self-similar regime}

We now develop an analytical formula for the probability distribution function of the growth rates of individual droplets, valid in the limit as $N\rightarrow \infty$, and under assumption that the dropsize distribution function converges to the self-similar form~\eqref{eq:fx_final}. The existence of this formula is a partial answer to Question 1 in the introduction, in the sense that this formula establishes that the probability distribution function of droplet growth rates is a legitimate object of study.

For these  purposes, we introduce the droplet growth rate
\begin{equation}
\alpha_i= \frac{t}{R_i}\frac{\mathd R_i}{\mathd t}= \frac{t}{R_i}\left(-\frac{1}{R_i^2}+\frac{\overline{u}}{R_i}\right).
\label{eq:alphai_def}
\end{equation}
We now want to characterize the distribution of the $\alpha_i$'s, over all droplets, which we denote by $p_\alpha(a,t)$:
\[
p_\alpha(a,t)=\text{Number of droplets with growth rate }a\text{ in the range }a\leq \alpha \leq a+\mathd a,\text{ at time }t.
\]  
Formally, we can calculate $p_\alpha(a,t)$ using the standard change-of-variable formula for probability theory:
\[
p_\alpha(a,t)=p(r(a),t)\left|\frac{\partial r}{\partial a}\right|,
\]
where $a$ and $r$ are connected by
\begin{equation}
a=\alpha,\qquad \alpha=\frac{t}{r}\left(-\frac{1}{r^2}+\frac{\overline{u}}{r}\right).
\label{eq:ar}
\end{equation}
As such, we have the following formal identity (the $t$-dependence is  suppressed for now):
%
%
\begin{equation}
p_\alpha(a)=p(r(a))\left|\frac{\partial r}{\partial a}\right|
         =p(r(a))\left|\frac{\partial a}{\partial r}\right|^{-1}.
\label{eq:change}
\end{equation}
It is not straightforward to implement the substitutions in Equation~\eqref{eq:change}, because $a$ is not a monotone function of $r$ (e.g. Figure~\ref{fig:r1} on the entire range of the function $a(r)$).  
Indeed, the derivative of $a(r)$ changes sign at $r=(3/2)\overline{u}^{-1}$, i.e. $\partial a/\partial r=0$ at $r=(3/2)\langle R\rangle$.  
Before solving this problem, we remark that the existence of the local maximum $\partial a/\partial r=0$ gives rise to the following useful results:
\begin{proposition}
There is a maximum droplet growth rate
\begin{equation}
\alpha_{max}=\tfrac{1}{3}\left(\frac{4}{9}\frac{t}{\langle R\rangle^3}\right),
\label{eq:alpha_max1}
\end{equation}
where $\langle R\rangle$ is computed via Equation~\eqref{eq:R_av}.
\end{proposition}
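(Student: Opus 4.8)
The plan is to regard $\alpha$, at fixed time $t$, as a function of the single continuous variable $r>0$ and to maximise it by elementary one-variable calculus. Starting from Equation~\eqref{eq:alphai_def} and using Equation~\eqref{eq:u_av} to note that $\overline{u}=1/\langle R\rangle$ depends on $t$ alone, I would write
\begin{equation}
\alpha(r)=t\left(\frac{\overline{u}}{r^{2}}-\frac{1}{r^{3}}\right),\qquad r>0.
\label{eq:alpha_of_r}
\end{equation}
This is smooth on $(0,\infty)$, diverges to $-\infty$ as $r\to 0^{+}$ (the $-r^{-3}$ term dominates) and decays to $0^{+}$ as $r\to\infty$, and is strictly positive once $r>1/\overline{u}=\langle R\rangle$; hence it attains a strictly positive global maximum at an interior point.

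Next I would differentiate: $\alpha'(r)=t\left(3r^{-4}-2\overline{u}\,r^{-3}\right)=t\,r^{-4}\left(3-2\overline{u}\,r\right)$. Since $r^{-4}>0$ and $\overline{u}>0$, the only sign change of $\alpha'$ on $(0,\infty)$ occurs at $r_{\star}=3/(2\overline{u})=(3/2)\langle R\rangle$, with $\alpha'>0$ for $r<r_{\star}$ and $\alpha'<0$ for $r>r_{\star}$; so $r_{\star}$ is the global maximiser. Substituting $r_{\star}$ into~\eqref{eq:alpha_of_r} gives
\[
\alpha(r_{\star})=t\,\overline{u}^{3}\left(\tfrac{4}{9}-\tfrac{8}{27}\right)=\tfrac{4}{27}\,t\,\overline{u}^{3}=\tfrac{4}{27}\,\frac{t}{\langle R\rangle^{3}},
\]
which is exactly $\tfrac{1}{3}\bigl(\tfrac{4}{9}\,t/\langle R\rangle^{3}\bigr)$, establishing the stated formula for $\alpha_{max}$.

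I do not expect any genuine obstacle; the only points worth flagging in the write-up are two observations. First, the maximising radius $r_{\star}=(3/2)\langle R\rangle$ coincides with the right endpoint of the support of the self-similar profile~\eqref{eq:fx_final} (which lives on $0\le x<3/2$, i.e.\ $0\le r<(3/2)\langle R\rangle$), so $\alpha(r)$ is in fact strictly increasing across the whole range of radii realised by the droplet population and $\alpha_{max}$ is a least upper bound approached as $r\to r_{\star}^{-}$; this is also why $\partial a/\partial r$ vanishes at $r_{\star}$, which must be kept in mind for the change of variables in Equation~\eqref{eq:change}. Second, inserting $\langle R\rangle^{3}=3\scalegamma t$ from Equation~\eqref{eq:R_av} with the selected value $\scalegamma=4/27$ collapses the bound to the time-independent constant $\alpha_{max}=1/3$, which is a useful consistency check and dovetails with the $1/3$ threshold that recurs throughout the paper.
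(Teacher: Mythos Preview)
Your proof is correct and follows the same approach as the paper: differentiate $\alpha(r)$ at fixed $t$, locate the unique critical point $r_\star=(3/2)\langle R\rangle$, and substitute back. You give a bit more care than the paper does in verifying that this critical point is a global maximum (via the boundary behaviour and the sign of $\alpha'$), and your two closing remarks correctly anticipate the content of the subsequent proposition and the change-of-variables discussion.
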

\begin{proof}
We start with Equation~\eqref{eq:ar}.  We compute $\partial\alpha/\partial r$ and set the result to zero.  This gives 
$r=(3/2)\langle R\rangle$ for the maximum growth rate.  This particular value of $r$ is then substituted back into 
Equation~\eqref{eq:ar} to produce $\alpha_{max}=(1/3)[(4/9)(t/\langle R\rangle^3)]$, as required.
\end{proof}
%
Furthermore,
\begin{proposition}
If the dropsize distribution function has the self-similar form~\eqref{eq:fx_final}, the maximum droplet growth rate simplifies:
\[
\alpha_{max}=1/3.
\]
\end{proposition}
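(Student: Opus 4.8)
The plan is to combine the previous proposition with the explicit value of the mean radius in the self-similar regime. From the preceding proposition we already have
\begin{equation*}
\alpha_{max}=\tfrac{1}{3}\left(\frac{4}{9}\frac{t}{\langle R\rangle^3}\right),
\end{equation*}
so the entire task reduces to evaluating $t/\langle R\rangle^3$ under the assumption that $p(r,t)$ has converged to the self-similar profile~\eqref{eq:fx_final}.

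First I would invoke Equation~\eqref{eq:R_av}, which states that in the self-similar regime $\langle R\rangle=(3\scalegamma t)^{1/3}$, hence $\langle R\rangle^3=3\scalegamma t$ and $t/\langle R\rangle^3=1/(3\scalegamma)$. Next I would use the selection rule that accompanies the self-similar solution, namely $\scalegamma=4/27$ (this is exactly the second of the two conditions imposed on Equation~\eqref{eq:ls_sim2} to obtain~\eqref{eq:fx_final}). Substituting, $t/\langle R\rangle^3 = 1/(3\cdot 4/27)=9/4$.

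Finally I would plug this back in:
\begin{equation*}
\alpha_{max}=\tfrac{1}{3}\left(\frac{4}{9}\cdot\frac{9}{4}\right)=\tfrac{1}{3},
\end{equation*}
which is the claim. There is no real obstacle here: the statement is a one-line corollary of the earlier proposition once the self-similar normalization $\langle R\rangle=(3\scalegamma t)^{1/3}$ and the value $\scalegamma=4/27$ are inserted. The only thing worth flagging is that the result relies on the selection rule $\scalegamma=4/27$, so I would make explicit that the hypothesis ``the dropsize distribution function has the self-similar form~\eqref{eq:fx_final}'' is precisely what licenses setting $\scalegamma=4/27$, and that the factor $4/9$ appearing in $\alpha_{max}$ is exactly cancelled by $t/\langle R\rangle^3=9/4$, leaving the clean value $1/3$ independent of the dimension $D$.
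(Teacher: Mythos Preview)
Your proof is correct and follows essentially the same route as the paper: the paper simply substitutes $\langle R\rangle=(4t/9)^{1/3}$ into Equation~\eqref{eq:alpha_max1}, and you have made explicit that this value arises from $\langle R\rangle=(3\scalegamma t)^{1/3}$ together with the selection rule $\scalegamma=4/27$. The only quibble is your closing remark that the result is ``independent of the dimension $D$'' --- the derivation here (including $\scalegamma=4/27$ and the form of~\eqref{eq:alpha_max1}) is specific to $D=3$, so I would drop that aside.
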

The proof of this statement is by direct computation, specifically by substituting $\langle R\rangle=(4t/9)^{1/3}$ into Equation~\eqref{eq:alpha_max1}.

Having now established the existence of the maximum droplet growth rate, it follows that that $a(r)$ is non-monotonic, and hence, the formal change-of-variables law~\eqref{eq:change} needs clarification.  
Therefore, to calculate $p_\alpha(a)$ properly, we refer to Figure~\ref{fig:r1}. 
\begin{figure}
    \centering
        \includegraphics[width=0.6\textwidth]{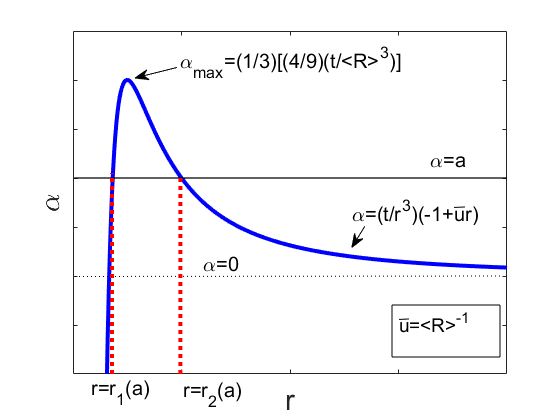}
        \caption{Definition sketch for the change of variable $a=(t/r^3)(-1+\overline{u}r)$.}
    \label{fig:r1}
\end{figure}
We look at a definite fixed value of $a$, denoted by $a$.  
For $a>0$, we read off the definitions of $r_1(a)$ and $r_2(a)$ from the figure.  
We look at the cumulative probability function for $a$,
\begin{eqnarray*}
F_\alpha(a)&=&\mathbb{P}(\alpha \leq a),\\
           &=&\mathbb{P}(r\leq r_1(a))+\mathbb{P}(r\geq r_2(a)),\\
                     &=&F_r(r_1(a))+\left[1-F_r(r_2(a))\right].
\end{eqnarray*}
We differentiate to compute the probability distribution function:
\begin{eqnarray*}
p_\alpha(a)&=&\frac{\mathd F_\alpha(a)}{\mathd a},\\
           &=&\frac{\partial F_r}{\partial r}\bigg|_{r_1(a)}\frac{\mathd r_1}{\mathd a}
                    -\frac{\partial F_r}{\partial r}\bigg|_{r_2(a)}\frac{\mathd r_2}{\mathd a},\\
                     &=&p(r_1(a))\frac{\mathd r_1}{\mathd a}-
                    p(r_2(a))\frac{\mathd r_2}{\mathd a}.
\end{eqnarray*}
Thus, the distribution of exponents $a$ is established for $a>0$:
\begin{equation}
p_\alpha(a)=p(r_1(a))\left|\frac{\mathd r_1}{\mathd a}\right|+
                    p(r_2(a))\left|\frac{\mathd r_2}{\mathd a}\right|,\qquad a>0,
\label{eq:change_of_var_final}
\end{equation}
where the first instance of $\left|\cdot\right|$ is added just to make the formula appear more symmetric.  
Referring back to Figure~\ref{fig:r1}, at $a=0$, the two roots $r_1(a)$ and $r_2(a)$ coincide, and for $a<0$ only one root (denoted by $r_1(a)$ survives).  
As such, the following final form of $p_\alpha$ applies,
\begin{equation}
p_\alpha(a,t)=\begin{cases}p(r_1(a),t)\left|\frac{\mathd r_1}{\mathd a}\right|+
                    p(r_2(a),t)\left|\frac{\mathd r_2}{\mathd a}\right|,& a>0,\\
                     p(r_1(a),t)\left|\frac{\mathd r_1}{\mathd a}\right|,&a\leq 0.
                    \end{cases}
\label{eq:change_of_var_final_final}
\end{equation}
where we have restored the time-dependence of the distributions.

In the case where $p(r,t)$ satisfies the LSW distribution~\eqref{eq:fx_final}, it is possible to compute the corresponding growth-rate distribution $p_\alpha(a,t)$.  
This  is shown in Figure~\ref{fig:p_alpha}.  
It is verified that in this instance, the distribution of growth rates is time-dependent.
\begin{figure}
    \centering
        \includegraphics[width=0.6\textwidth]{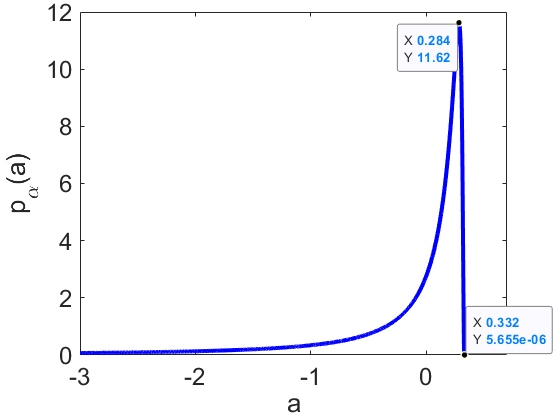}
        \caption{The probability distribution function $p_\alpha(a)$.  The distribution is time-independent so the notation $p_\alpha(a,t)$ can be replaced with $p_\alpha(a)$.}
    \label{fig:p_alpha}
\end{figure}
The distribution $p_\alpha(a)$ goes to zero at $a=1/3$, corresponding to the fact that $a=1/3$ is the maximum droplet growth rate.  
Otherwise, the distribution is sharply peaked at $a\approx 0.284$, corresponding to the mode or the most probable droplet growth rate in the problem.  
Notably, the distribution is strongly skewed to the left, with a long tail of negative growth rates extending to $a=-\infty$.  
The negative growth rates correspond to the evaporating droplets.  
The maximum growth rate corresponds to a winner-takes-all scenario where a single droplet is growing to the maximal extent possible, at the expense of all other droplets in the system.  
Hence, the distribution in Figure~\ref{fig:p_alpha} makes sense physically.

\subsection{System energy}
\label{sec:systemEnergy}

We now develop an energy function $F$ to characterize the dynamics of Equation~\eqref{eq:dRi}.    This then enables us to compute $\beta=-(t/F)(\mathd F/\mathd t)$.   The formula is completely general, however, in the case where $N\rightarrow \infty$ and where $p(r,t)$ assumes the self-similar form, we show that the probability distribution function of $\beta$ is well-defined and equaly (trivially) to a delta function.  
 This then provides the necessary insights to argue for the existence of a probability distribution function for $\beta$ for finite-size (and non-self-similar) systems.  In this way, {\textbf{Question 1}} in the introduction is answered.

The starting-point for the development of the energy function is the identification $F\sim \sigma \sum_{i=1}^N 4\pi R_i^2$, where $\sigma$ is a surface tension, and the summation gives the total interfacial area of the system.  Implicit in the adimensional Equation~\eqref{eq:dRi} is the value $\sigma=1/2$.  
Implicit also in that equation is the constraint that the total droplet volume is constant, $\sum_{i=1}^N (4\pi /3)R_i^2=V_0$, where $V_0$ is constant.  
Thus, properly constituted, the surface energy contains a constraint term:
\begin{equation}
F= \sum_{i=1}^N \tfrac{1}{2} R_i^2-\lambda\left(\sum_{i=1}^3 \tfrac{1}{3} R_i^3-\frac{V_0}{4\pi}\right),
\label{eq:Fdef}
\end{equation}
where $\lambda$ is the possibly time-dependent Lagrange multiplier which enforces the constancy of $\sum_{i=1}^3 (4\pi/3)R_i^3$, and where we have omitted an overall factor of $4\pi$ in the definition of $F$ -- this is done for convenience. 

By differentiating Equation~\eqref{eq:Fdef}, we obtain:
\begin{equation}
\frac{\mathd F}{\mathd t}=\sum_{i=1}^N \left(R_i-\lambda R_i^2\right)\dot R_i+(\mathd \lambda/\mathd t)\left(\sum_{i=1}^3 \tfrac{1}{3} R_i^3-\frac{V_0}{4\pi}\right)
\end{equation}
The last term proportional to $\mathd \lambda/\mathd t$ vanishes on enforcing the constraint on $\sum_{i=1}^3 (4\pi/3)R_i^3$.  
Thus,
\begin{equation}
\frac{\mathd F}{\mathd t}
\stackrel{\text{Eq.~\eqref{eq:dRi}}}{=}\sum_{i=1}^N H(R_i)\left(R_i-\lambda R_i^2\right)\left(-\frac{1}{R_i^2}+\frac{\overline{u}}{R_i}\right)
\stackrel{\lambda=\overline{u}}{=}-\sum_{i=1}^N \frac{H(R_i)}{R_i}\left(1-\overline{u}R_i\right)^2,
\label{eq:dFdt}
\end{equation}
hence $\mathd F/\mathd t\leq 0$.
Here, the equation $\lambda=\overline{u}$ can be made, since $\lambda$ and $\overline{u}$ are associated with the same constraint.  
%
Using the identity $\lambda=\overline{u}$, we can write   $\partial F/\partial R_i=R_i-\overline{u} R_i^2$, and hence, from Equation~\eqref{eq:dRi},
\begin{equation}
\frac{\mathd R_i}{\mathd t}=-m(R_i)\frac{\partial F}{\partial R_i},\qquad m(R_i)=H(R_i)R_i^3.
\label{eq:dRi_m}
\end{equation}
Thus, the dynamics of the droplets take the form of a gradient flow, with mobility $m(R_i)=H(R_i)R_i^3$.  Furthermore, we can therefore write
\begin{equation}
\frac{\mathd F}{\mathd t}=-\sum_{i=1}^N m(R_i)\left(\frac{\partial F}{\partial R_i}\right)^2,
\label{eq:dFdt1}
\end{equation}
which makes the relation $\mathd F/\mathd t\leq 0$ more manifest.
In analogy to the growth rate $\alpha_i$ (Equation~\eqref{eq:alphai_def}) for individual droplets, we introduce an energy decay rate, applicable to the entire system of $N$ droplets:
\begin{equation}
\beta=-\frac{t}{F}\frac{\mathd F}{\mathd t}.
\label{eq:betaN}
\end{equation}

In LSW theory, $\beta$ necessarily takes on only one value; in other words, the distribution of $\beta$-values in that limit is the delta function.  This is noted in the following proposition:
\begin{proposition}
\label{prop:awesome}
At late times, $p_\beta(b,t)\rightarrow \delta(b-(1/3))$, for the LSW limit, i.e. for the large-domain limit~\eqref{eq:Ninf} and the self-similar dropsize distribution~\eqref{eq:fx_final}.
\end{proposition}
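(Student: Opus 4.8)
The plan is to show that once the dropsize distribution has relaxed onto the self-similar profile~\eqref{eq:fx_final}, the free energy $F$ is an exact power of $t$, so that $\beta=-(t/F)(\mathd F/\mathd t)$ reduces to the single deterministic value $1/3$; since, under the hypotheses of Section~\ref{sec:LSW}, every realization in the ensemble is attracted to the same profile, the probability measure induced on $\beta$ then collapses onto $\delta(b-1/3)$. First I would put the energy~\eqref{eq:Fdef} into continuum form: on the constraint surface $\sum_i(4\pi/3)R_i^3=V_0$ the Lagrange-multiplier term in~\eqref{eq:Fdef} vanishes identically, so in the limit~\eqref{eq:Ninf} one has $F(t)=\tfrac12\sum_i R_i^2=\tfrac12\int_0^\infty r^2\,p(r,t)\,\mathd r$.

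Next I would insert the similarity ansatz~\eqref{eq:sim} with the exponents already fixed in the text, namely $b=1/3$, $a=-4b=-4/3$, together with $c=(3\scalegamma)^{1/3}$ and $\scalegamma=4/27$. Changing variables to $x=r/(ct^{1/3})$ gives $F(t)=\tfrac12 c^3\,t^{a+3b}\int_0^\infty x^2 f(x)\,\mathd x=A\,t^{-1/3}$, where $A=\tfrac12 c^3\int_0^{3/2}x^2 f(x)\,\mathd x$ is a finite positive constant because $f$ in~\eqref{eq:fx_final} is supported on $[0,3/2)$. Differentiating, $\mathd F/\mathd t=-\tfrac13 A\,t^{-4/3}$, whence $\beta=-(t/F)(\mathd F/\mathd t)=1/3$ identically, with no residual dependence on the moments of $f$ or on $\scalegamma$. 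As a cross-check one can instead begin from~\eqref{eq:dFdt} written in continuum form, $\mathd F/\mathd t=-\int_0^\infty r^{-1}(1-\overline{u}r)^2 p(r,t)\,\mathd r$ with $\overline{u}=1/(ct^{1/3})$ as in~\eqref{eq:u_av}; substituting the ansatz and using the self-similar ODE~\eqref{eq:ls_sim1} to evaluate the resulting moment integral again yields $\beta=1/3$.

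Finally I would invoke the convergence discussion of Section~\ref{sec:LSW}: for initial data with compact support satisfying the weak selection rules, $p(r,t)$ tends to~\eqref{eq:fx_final} as $t\to\infty$, so $\beta(t)\to 1/3$ along every such trajectory and the law of $\beta$ converges weakly to $\delta(b-1/3)$. The difficulty here is not the algebra --- it amounts to a scaling count --- but the conceptual point of what $p_\beta$ denotes in this limit: the only source of randomness is the initial configuration, and the substance of the proposition is that this randomness is asymptotically erased by the attracting profile. The place where care is needed is the restriction to initial data in the basin of attraction of~\eqref{eq:fx_final}; for data violating the weak selection rules the late-time dynamics can fail to be self-similar, and the collapse onto a delta function need not occur.
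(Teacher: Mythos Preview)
Your proof is correct and follows essentially the same route as the paper: pass to the continuum energy $F=\tfrac12\int_0^\infty r^2 p(r,t)\,\mathd r$ on the constraint surface, substitute the self-similar ansatz~\eqref{eq:sim} with $a=-4b$ and $b=1/3$, and read off $F\propto t^{-1/3}$ so that $\beta=1/3$ deterministically. Your additional cross-check via the continuum form of~\eqref{eq:dFdt} and your explicit discussion of the basin-of-attraction caveat are elaborations beyond what the paper provides, but the core scaling argument is identical.
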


\begin{proof}
Once the volume-constraint $\sum_{i=1}^N (4\pi/3)R_i^3=V_0$ has been implemented, the energy is just $F=(1/2)\sum_{i=1}^N R_i^2$. 
In the LSW limit, this can be computed explicitly, via Equation~\eqref{eq:sim}
\begin{equation}
F=
\tfrac{1}{2}\int_0^\infty r^2 p(r,t)\mathd r.
\label{eq:F_expected}
\end{equation}
%
We therefore have:
\[
F=\tfrac{1}{2}c^3 t^{3b+a}\int_0^\infty x^2 f(x)\mathd x,\qquad \text{as }t\rightarrow\infty.
\]
The late-time limit is required here as the LSW theory is valid only asymptotically, as $t\rightarrow\infty$.  
We also use $a=-4b$, hence
\[
F=\tfrac{1}{2}c^3 t^{-b}\int_0^\infty x^2 f(x)\mathd x.
\]
We now use $b=1/3$ to conclude that $F\propto t^{-1/3}$, and hence, $\beta=-(t/F)(\mathd F/\mathd t)=1/3$.  
Thus, $\beta$ takes only a single value in the LSW theory, hence $p_\beta(b,t)\rightarrow\delta(b-(1/3))$ as $t\rightarrow\infty$.
\end{proof}

In contrast to the LSW theory considered in this section, in Section~\ref{sec:drop_pop}, equations for $N<\infty$ droplets (i.e. Equations~\eqref{eq:dRi}--\eqref{eq:uval}) are  solved via numerical simulation.  An ensemble (with $M$ members) of such simulations is constructed, and a probability distribution function $p_\beta^N(b,t)$ is thereby constructed.
Proposition~\ref{prop:awesome} shows that $p_\beta^N(b,t)\rightarrow \delta(b-(1/3))$ as $N\rightarrow \infty$, provided the distribution of initial droplet radii satisfies the weak selection rules.  However, for finite $N$ and / or for initial conditions not satisfying the weak selection rules, the possibility is open that the distribution of $p_\beta^N(b,t)$ may be broad.   This is examined in depth in the next section.

\section{Model 1-- Numerical Simulations}
\label{sec:drop_pop}


\subsection{Methodology}

In this section, we solve Equation~\eqref{eq:dRi} numerically.  The initial radius of any particular droplet is given by
\begin{equation}
R_i(t=0)=r_i,
\label{eq:dRi_init}%
\end{equation}%
where $r_i$ is a random variable drawn from a uniform distribution between $0$ and $1$.  The system of equations~\eqref{eq:dRi} is solved numerically using ODE45 in Matlab.  
Equation~\eqref{eq:dRi} has a coordinate singularity at $R_i=0$; this is regularized in the numerical method by solving 
\begin{equation}
\frac{\mathd R_i}{\mathd t}=H(R_i+\epsilon)\left(-\frac{1}{R_i^2}+\frac{\overline{u}}{R_i}\right)
\end{equation}
instead; here $H(s)$ is the Heaviside step function.  In the simulations, we have taken $\epsilon=10^{-3}$; however, we have also verified that reducing $\epsilon$ to $10^{-4}$ makes no change to the results.
This regularization correctly reduces the radius of a small droplet to $0$ and allows us to treat the coordinate singularity numerically.

\subsection{Results -- Single Simulation}

We first of all show a space-time plot of the instantaneous histogram of $x$-values for a large droplet 
population $(N=100,000$), generated from a single simulation.  
The variable $x$ is recalled here as $x=R/c t^{1/3}$, where $R$ is a droplet radius.  The purpose of this calculation is to establish the extent to which the simulation results agree with LSW theory.
%
%
%
%
The results are presented in Figure~\ref{fig:histoLong}.
\begin{figure}[htb]
	\centering
		\includegraphics[width=0.7\textwidth]{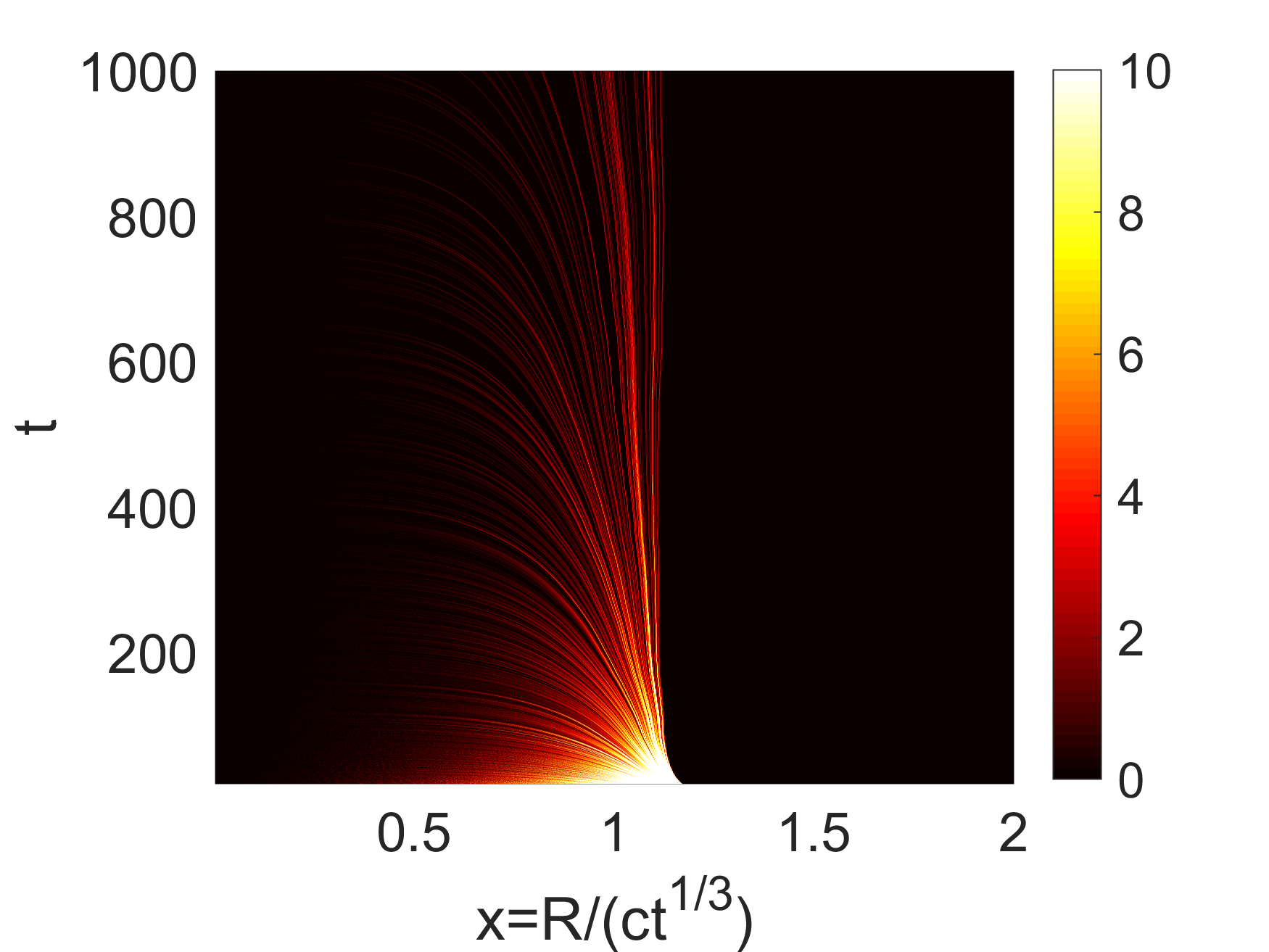}
		\caption{Plot of the instantaneous histogram of the dropsize distribution, with $t\geq 50$,  for a single simulations with $N=100,000$ droplets present initially.  The plot uses the self-similar coordinate $x=R/ct^{1/3}$.}
	\label{fig:histoLong}
\end{figure}
The first observation is that the histogram does not reach a statistically steady state.  Two reasons for this non-convergence are identified.

\paragraph{Extreme Finite-Size Effects:} Since the system is finite, the total initial droplet volume is finite.  Therefore, large droplets can draw on only a finite number of smaller droplets as fuel for the coarsening.  The coarsening phenomenon will therefore eventually cease when there is only a single large droplet present.  This is an extreme finite-size effect which prevents the system from attaining the LSW statistically steady state at extremely late times.

\paragraph{Dependence on initial drop-size distribution:} At late times (but before the onset of the extreme finite-size effects), the system is still far from the steady state, as can be seen by inspection of Figure~\ref{fig:histoLong}.  Therefore, it can be concluded that the statistics of the system do not attain the LSW form even at such late times.  
To emphasize this point,  in Figure~\ref{fig:n100000}(a) we make a comparison between LSW theory and the numerical results  by plotting cumulative histograms of droplet radii as a function of time, and comparing with LSW theory -- the two sets of curves visibly disagree. 

For clarity's sake, we summarize here the method of generating the cumulative histograms in Figure~\ref{fig:n100000}.  
These are extracted from a single simulation, with data extracted in a time range before the onset of extreme finite-size effects. 
As such,  the cumulative histogram  at time $t$ means that we bin all values of $x$ recorded across all droplets in the simulation, starting at $t=20$, up to and including and including the final time $t$, where $t\leq 40$.  
This can be contrasted with the instantaneous histogram, which would be obtained by binning all values of $x$ recorded across the simulations at precisely the time $t$.  
If the system reaches a statistically steady state, the two histograms should agree at late times.   
\begin{figure}[htb]
	\centering
	\subfigure[]{\includegraphics[width=0.45\textwidth]{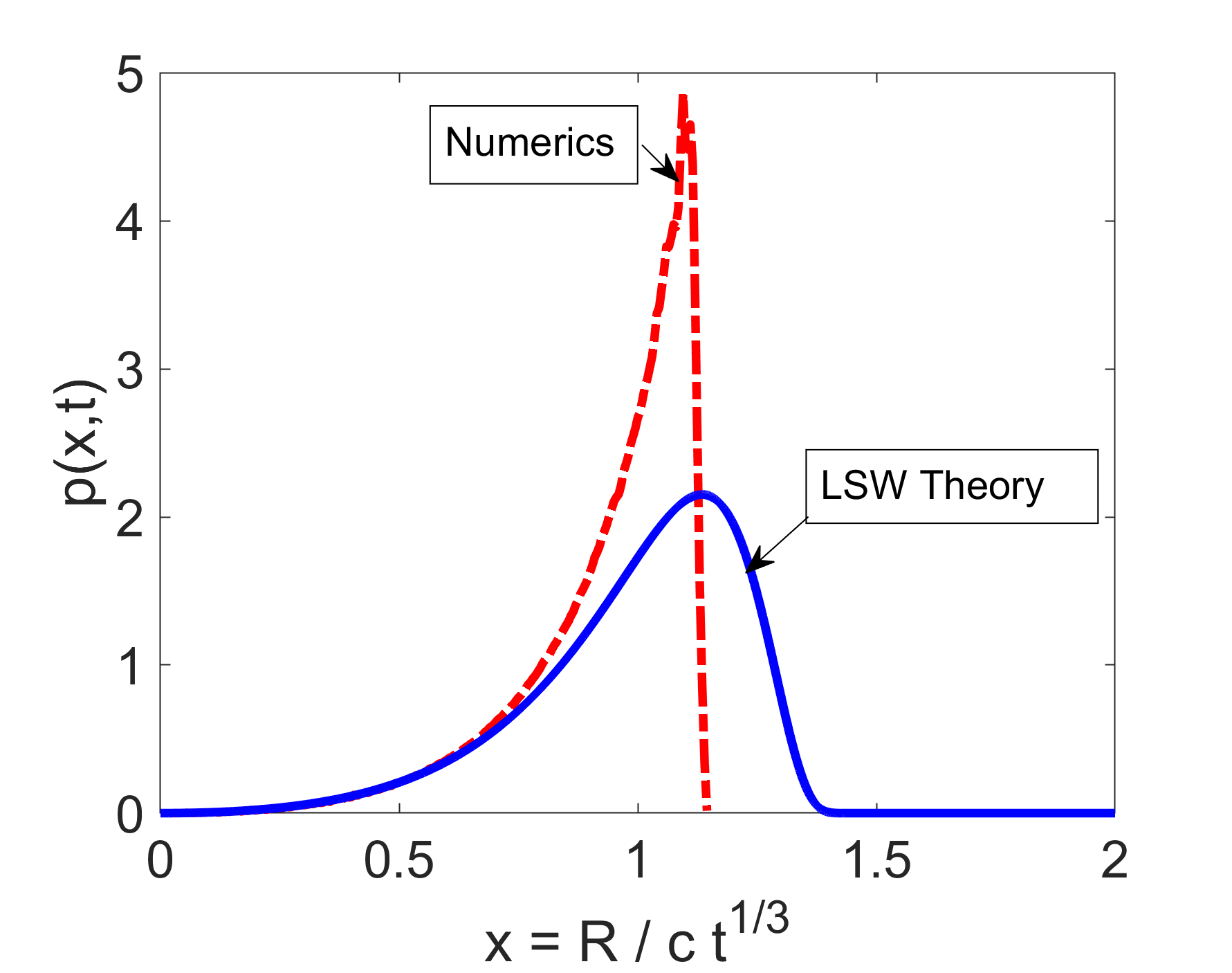}}
	\subfigure[]{\includegraphics[width=0.45\textwidth]{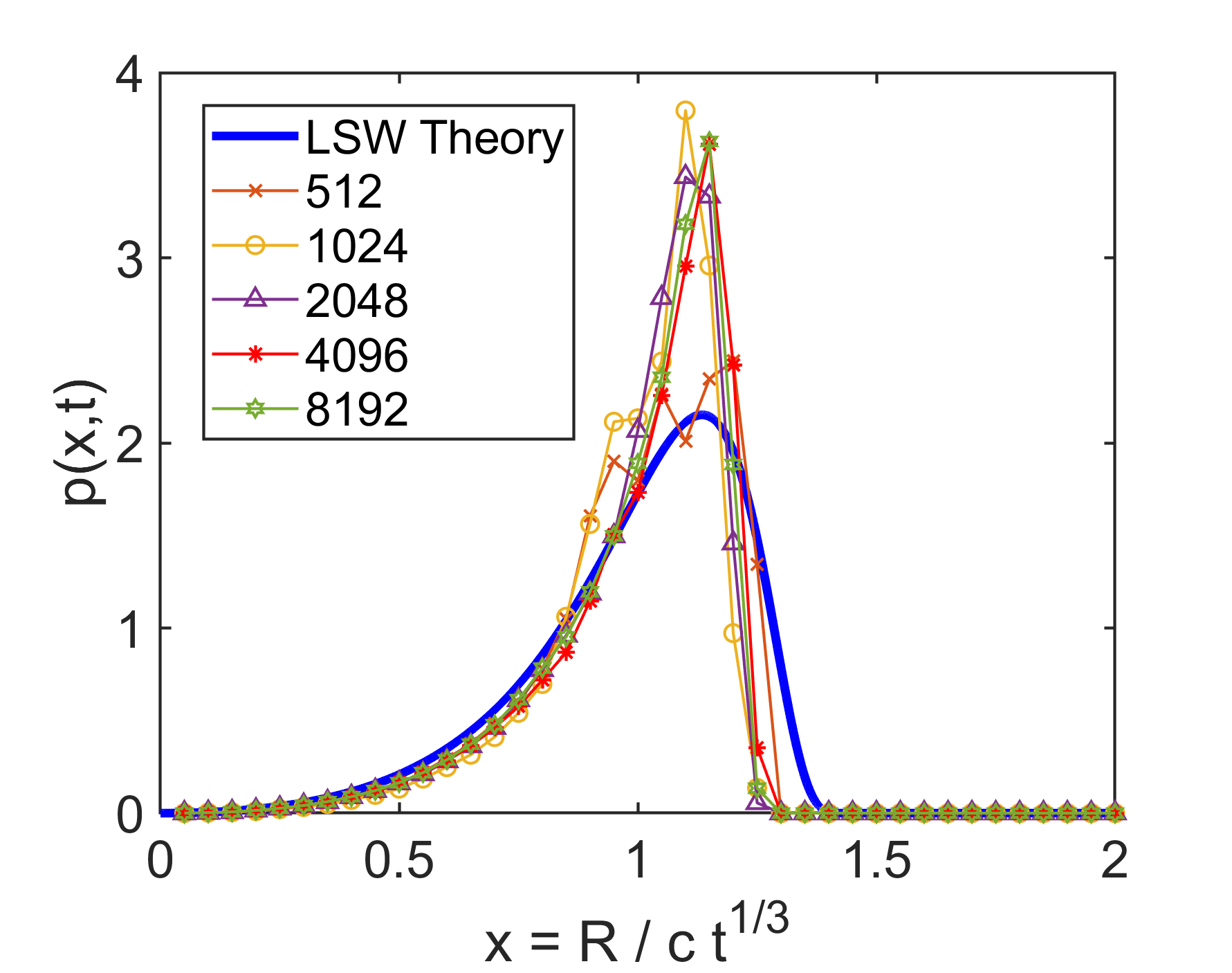}}
	\caption{Comparison between numerics and LSW theory.  In both panels, the numerics correspond to cumulative histograms, and the theory refers to the LSW theory with the analytic self-similar distribution.
	(a)  Numerics, with $N=100,000$ and $40\leq t\leq 200$.   (b)  Numerics, various values of $N$ and $10\leq t\leq 40$. }
	\label{fig:n100000}
\end{figure}

It is of interest to look into the lack of agreement between the numerics and the LSW theory in Figure~\ref{fig:n100000}.  
The number of droplets $N$ present initially in the simulation can be ruled out as the cause of the disagreement: the dependence of the cumulative histogram on $N$ is shown in Figure~\ref{fig:n100000}(b); there is little or no difference between all of the considered $N$ values.
Therefore, the cause of the disagreement in Figure~\ref{fig:n100000} can be attributed to the shape of the initial dropsize distribution: the initial dropsize distribution is compactly supported (the uniform distribution with initial radii between $0$ and $1$); however, this distribution is not smooth at the points where it touches down to zero.  
Therefore, the initial distribution does not satisfy the weak selection rules (Section~\ref{sec:LSW}), and hence, convergence to the LSW statistics is not guaranteed: this explains the results in Figure~\ref{fig:n100000}.  
It can be emphasized that in other works on droplets (e.g. Reference~\cite{yao1993theory}), the initial dropsize distribution was carefully selected such that late-time convergence to LSW statistics was obtained.  From our results, the convergence to the LSW statistics is demonstrated not to be robust.  
For completeness, we plot the cumulative histograms of droplet growth rates in Figure~\ref{fig:alphaTheory}, where we again demonstrate non-agreement between the LSW theory and the numerical results.
\begin{figure}[htb]
\centering
\includegraphics[width=0.6\textwidth]{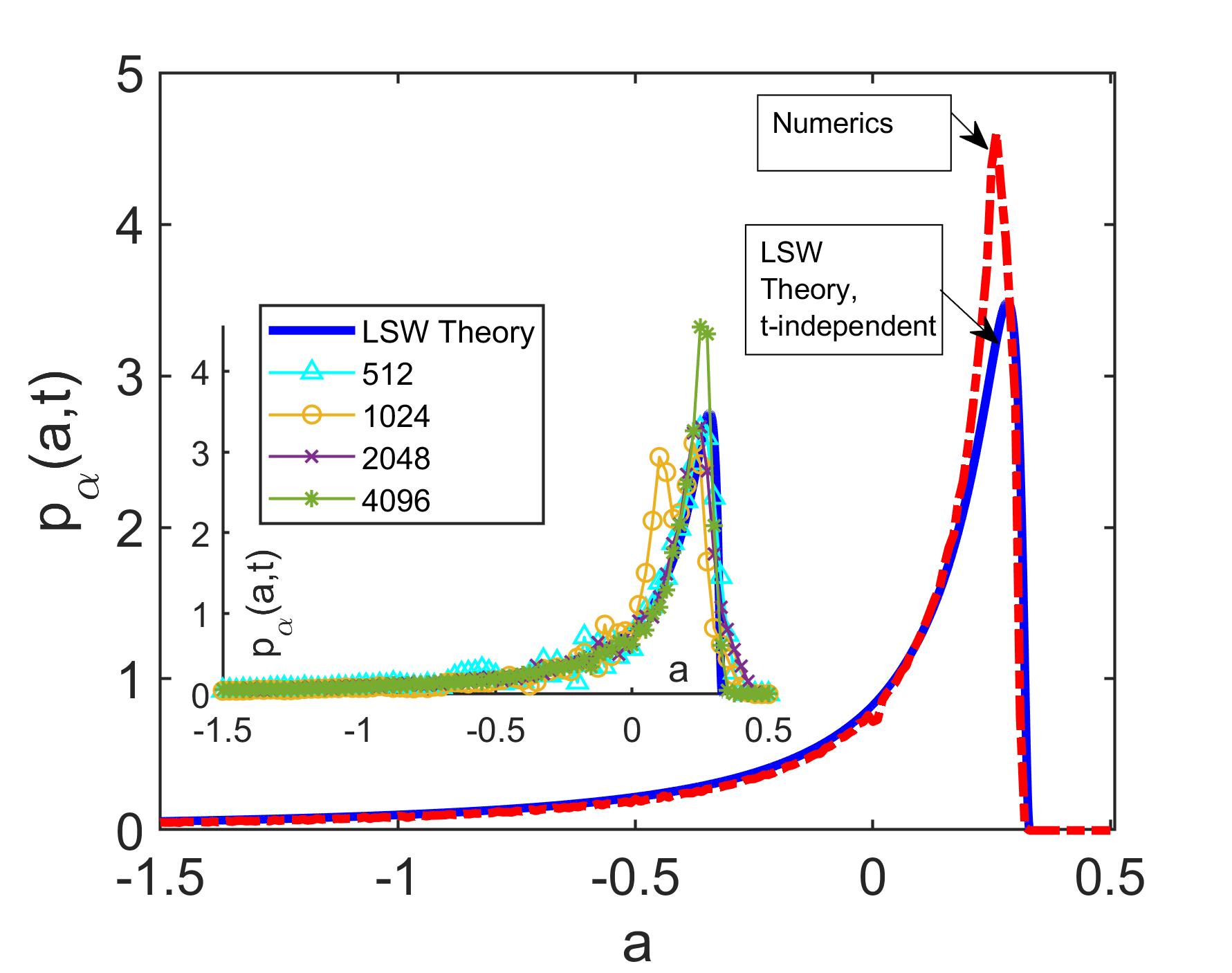}
\caption{Droplet growth rates:  Comparison between LSW theory and numerics for $N=100,000$ (the inset shows the effect of varying $N$).  For the numerics, the presented results are cumulative histograms taken over the range $10 < t < 40$.}
\label{fig:alphaTheory}
\end{figure}

For the same simulation ($N=100,000$), we examine  $\beta$, computed as in Equation~\eqref{eq:betaN}.  This is recalled to be a property of the entire droplet population, rather than a property of any one individual droplet (cf. Section~\ref{sec:LSW}).    The time series of $\beta$ is shown in Figure~\ref{fig:beta_N10000}.
From the figure, it can be seen that $\beta$ consists of a mean component and a fluctuation, which we hereafter write as $\beta=\betaconst+\delta\beta$.  The fluctuation $\delta\beta$ is a piecewise-continuous function of time, which occasionally jumps discontinuously from a positive value to either a smaller positive value or to a negative value (inset, Figure~\ref{fig:beta_N10000}).  The magnitude of the jump is seen to increase with time, along with the interval between jumps.  These observations enable us to trace the cause of the jumps: they are associated with the death of a droplet when the Heaviside step function is activated in Equation~\eqref{eq:dRi}; this in turn induces a discontinuity in $\beta$.  The waiting time between between such jumps increases as the system evolves: at late times, there are only a few droplets left, hence fewer droplet deaths and longer waiting times.  Similarly, at late times the remaining droplets are relatively large, meaning that the death of any one droplet induces a relatively large jump discontinuity in $\beta$.

\begin{figure}[htb]
    \centering
     \includegraphics[width=0.6\textwidth]{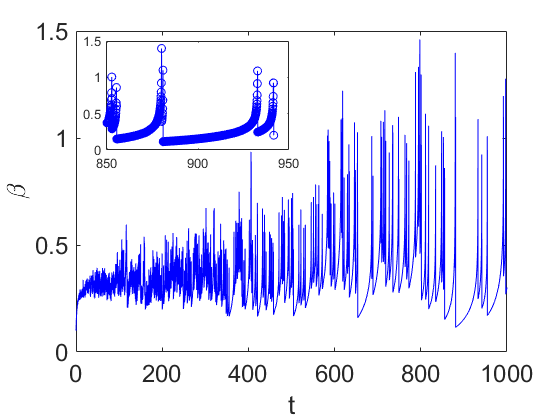}
				\caption{Time series of $\beta$ for a single simulation with $N=100,000$ droplets present initially.  The inset shows a portion of the main figure, on a larger scale.}
    \label{fig:beta_N10000}
\end{figure}

The `fitting' of a stochastic model to $\beta$ will be the subject of future work.  For the time being, it suffices to note that $\beta$ is piecewise smooth, and that the jumps are well-separated.  Therefore, the free energy $F$ may be recovered from $\beta=-(t/F)(\mathd F/\mathd t)$ via ordinary Riemann integration:
\begin{equation}
-\log \frac{F(t)}{F(t_0)}=\log(t/t_0)^{\betaconst}+\int_{t_0}^t\frac{\delta\beta}{t}\mathd t,
\label{eq:fint1}
\end{equation}
where the integral is performed piecewise, i.e. over each continuous segment of the time series of $\delta \beta$.  Equation~\eqref{eq:fint1} may be re-arranged as:
\[
\log \left(\frac{F(t_0)t_0^{\betaconst}}{F(t)t^{\betaconst}}\right)=\int_{t_0}^t\frac{\delta\beta}{t}\mathd t.
\]
Both sides can be squared to give:
\[
\left[\log \left(\frac{F(t_0)t_0^{\betaconst}}{F(t)t^{\betaconst}}\right)\right]^2=\left[\int_{t_0}^t\frac{\delta\beta}{t}\mathd t\right]^2
\leq  (t-t_0)\int_{t_0}^t\left(\frac{\delta\beta}{t}\right)^2 \mathd t,
\]
where we have used the Cauchy-Schwarz inequality with $t>t_0$.  Finally, by taking expectation values with respect to the measured induced by the random jumps in $\delta \beta$, we obtain:
\[
\mathbb{E}\bigg\{\left[\log\left(\frac{F(t_0)t_0^{\betaconst}}{F(t)t^{\betaconst}}\right)\right]^2\bigg\} \leq (t-t_0)\int_{t_0}^{t}\frac{\mathbb{E}(\delta\beta^2)}{t^2}\mathd t,\qquad t>t_0.
\]
i.e. Equation~\eqref{eq:beta_model2} in the introduction.

\subsection{Results -- Ensemble of simulations}

We now turn to simulations to estimate the expectation value of $\delta\beta^2$.  
As $\beta=\betaconst+\delta\beta$ is a characterization of an entire droplet population (i.e. an entire simulation of $N$ droplets), it is necessary to gather statistics of $\beta$ across an ensemble of many such simulations -- see Table~\ref{tab:ensembles}.
For these purposes, parallel computing in Matlab is implemented such that there is one realisation of the ensemble per CPU core.  
\begin{table}[htb]
\centering
\begin{tabular}{|c|c|c|} 
\hline
    Ensemble  & $N$ & Number of simulations  \\
		Number    &     & in ensemble            \\
    \hline
		\hline
    1 & 128  & $M=100$ \\
    2 &256   & $M=100$ \\
    3 &512   & $M=100$ \\
    4 &1024  & $M=100$ \\
    5 &2048  & $M=100$ \\
    6 &4096  & $M=100$ \\
    7 &8192  & $M=100$ \\
		\hline
\end{tabular}
\caption{Explanation of the scheme for constructing the ensemble of simulations.  We keep $M=100$ fixed throughout our investigations, however, we vary $N$ systematically to explore finite-size effects.}
\label{tab:ensembles}
\end{table}
For a fixed value of $N$, we thereby obtain the following estimates of the moments of $\beta$:
\begin{subequations}
\begin{eqnarray}
\mu_1(t)&=&\frac{1}{M}\sum_{j=1}^M \beta_j(t),\\
\mu_p(t)&=&\frac{1}{M}\sum_{j=1}^M \left[\beta_j(t)-\mu_1(t)\right]^p,\qquad p=2,3,\cdots,
\end{eqnarray}%
\label{eq:mup}%
\end{subequations}%
These moments are plotted in an interval $20<t<40$ (i.e. before the onset of extreme finite-size effects) in 
Figure~\ref{fig:moments}. 
\begin{figure}[htb]
	\centering
	\subfigure[\,\,\text{Mean}]{\includegraphics[width=0.45\textwidth]{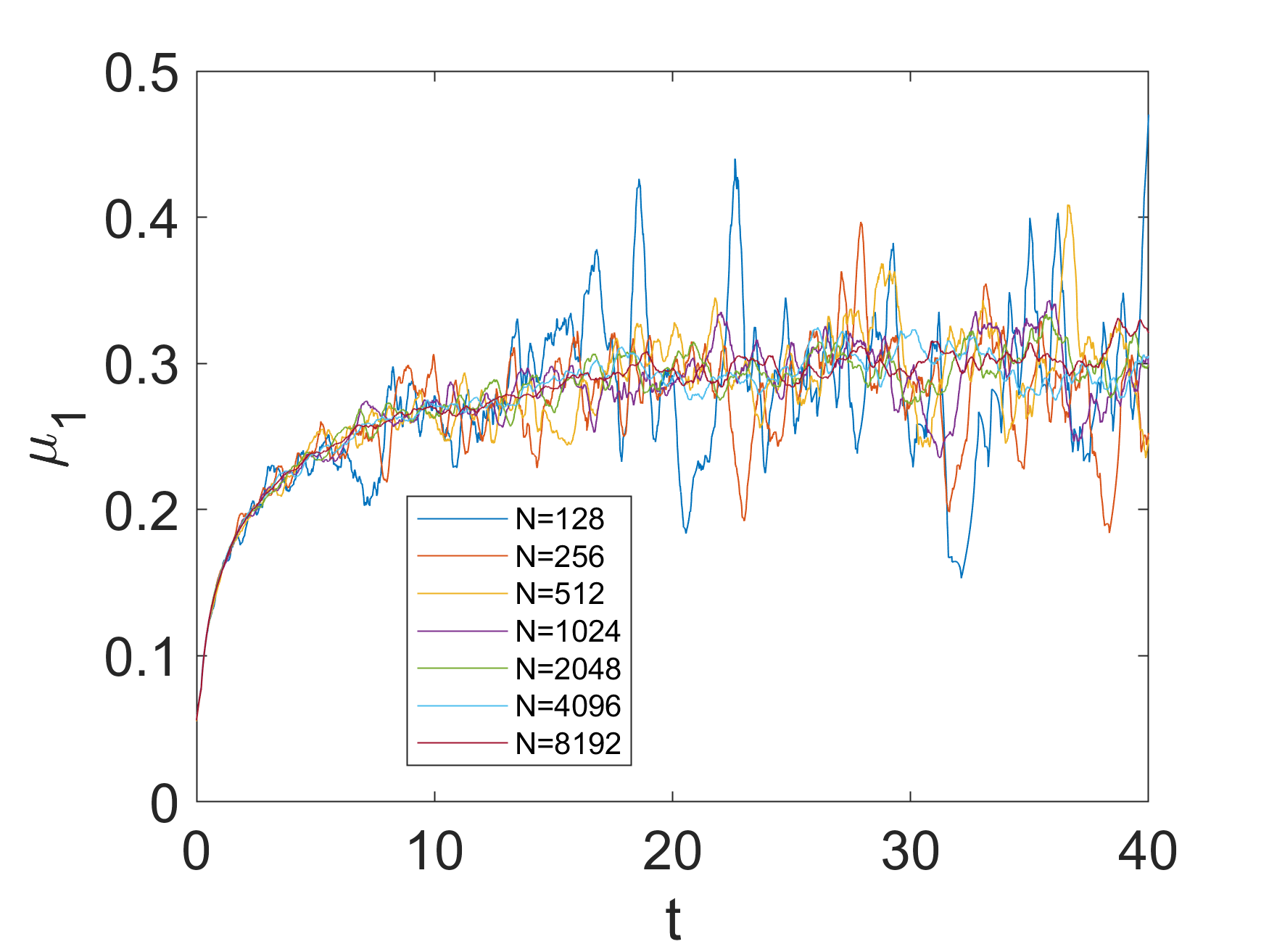}}
	\subfigure[\,\,\text{Variance}]{\includegraphics[width=0.45\textwidth]{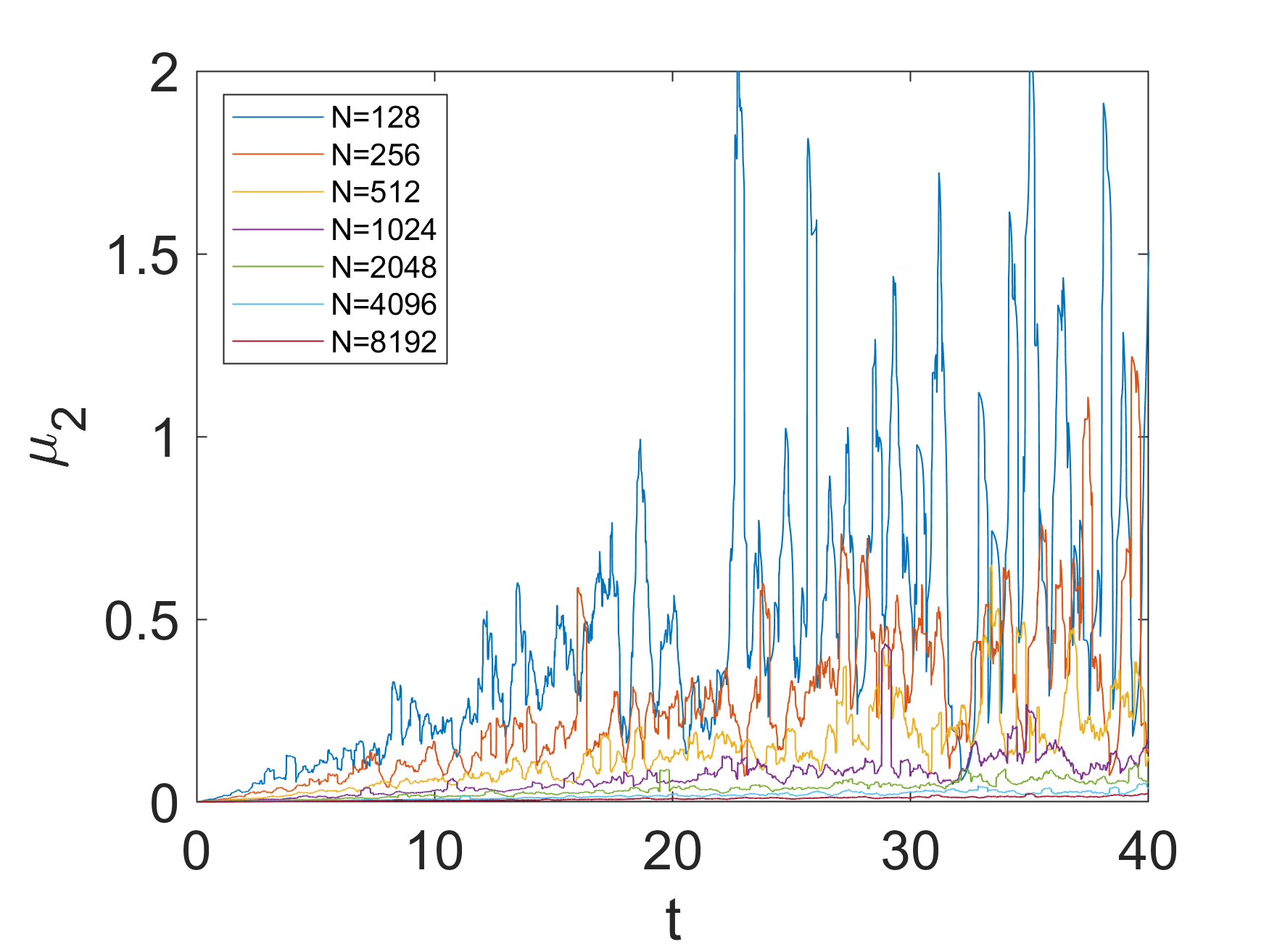}}
	\caption{Plots showing time series of  $\mu_1$ and $\mu_2$  Moving averages are shown to guide the eye.}
	\label{fig:moments}
\end{figure}
 From these results, and for $N$ sufficiently large, it can be inferred that $\mu_1$ fluctuates around a constant value $\betaconst$; the constant value can be estimated from
\begin{equation}
\betaconst\approx\frac{1}{t_2-t_1}\int_{t_1}^{t_2}\mu_1(t)\,\mathd t,\qquad t_1=20,\qquad t_2=40.
\label{eq:beta0_stats}
\end{equation}
From Figure~\ref{fig:moments}, it can be seen that a value $N\apprge 512$ is required for this description to hold.
From the same figure, a least-squares fit of $\mu_2$ may also be extracted, to reveal the trend $\mu_2\propto t$.    This is shown in more detail in Figure~\ref{fig:moments_fit}.
\begin{figure}[htb]
	\centering
	\includegraphics[width=0.6\textwidth]{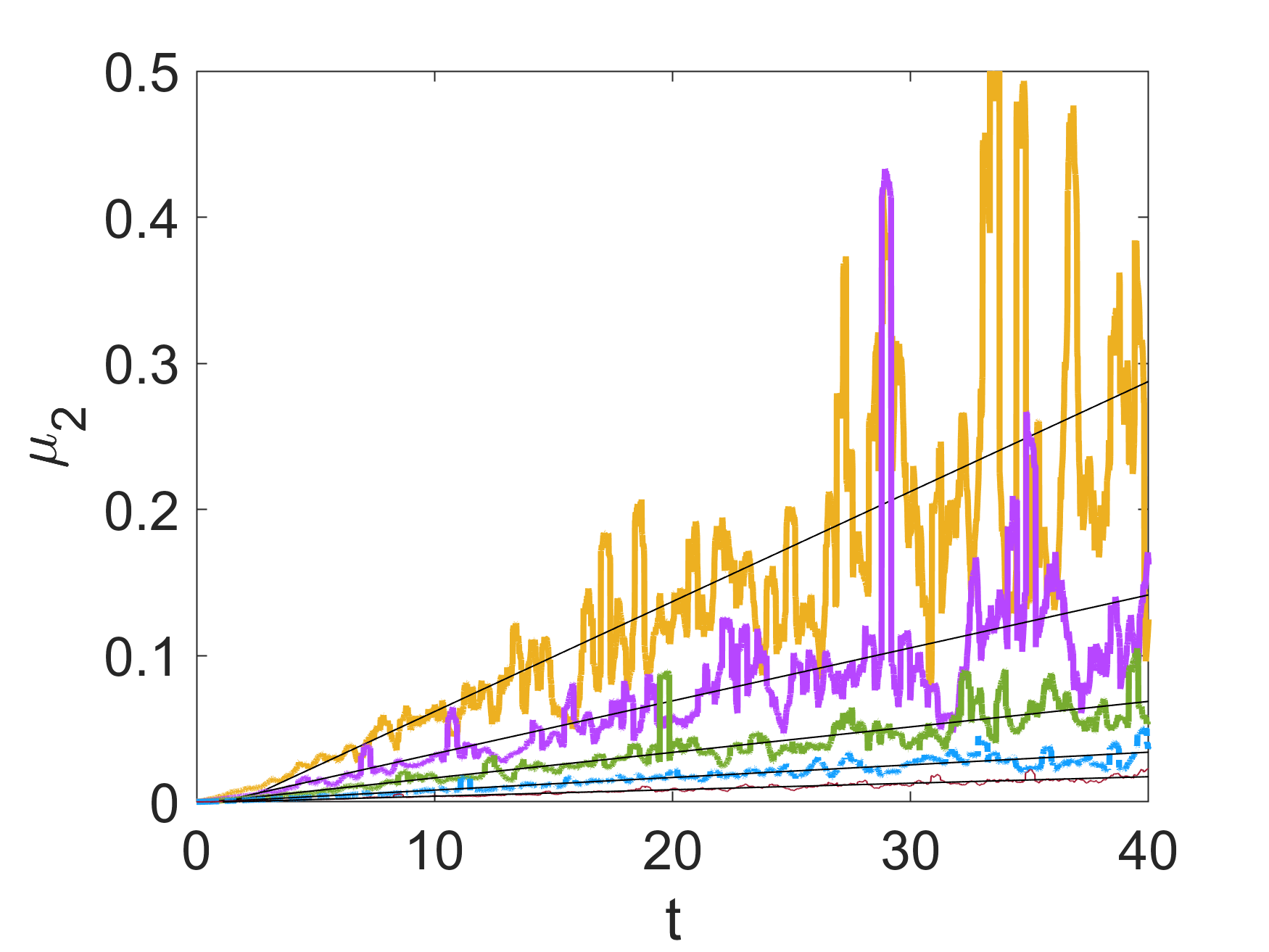}
	\caption{Plots showing time series of $\mu_2$, with least-squares fitting.  The figures shows the variance for 
	$N=512,\cdots,8192$.  The value $N=512$ corresponds to the largest variance, while the value $N=8192$ corresponds to the smallest variance.  The trend is monotone decreasing. }
	\label{fig:moments_fit}
\end{figure}
These results suggest that $\mathbb{E}(\delta\beta^2)=kt$, where $k$ is a constant.  This constant  may be estimated from the least-squares fitting.  The estimates for $k$ (and for $\betaconst$) are shown in Table~\ref{tab:stats}.  The fit for $k$ has been performed over the entire range of $t$-values, $t\in [0,40]$; the resulting values of $k$ do not change much if the range of $t$ used for the fitting is changed.
\begin{table}
\centering
\begin{tabular}{|c|c|c|} 
\hline
    {$N$} & $\frac{1}{20}\int_{20}^{40}\mu_1(t)\,\mathd t$  & $k$ \\
    \hline
		\hline
    512   & 0.24      & 0.0075  	\\
    1024  & 0.2613    & 0.0036   \\
    2048  & 0.2833    & 0.0017 	\\
    4096  & 0.2908    & $8.6\times 10^{-4}$		\\
    8192  & 0.2995    & $4.5 \times 10^{-4}$   \\
		\hline
\end{tabular}
\caption{Estimates of $\mathbb{E}(\beta)=\betaconst$ and $\mathbb{E}(\delta\beta^2)$ for Model 1, for various problem sizes $N$.}
\label{tab:stats}
\end{table}

From these results, we can infer:
\begin{eqnarray*}
\mathbb{E}\bigg\{\left[\log\left(\frac{F(t_0)t_0^{\betaconst}}{F(t)t^{\betaconst}}\right)\right]^2\bigg\} 
&\leq& (t-t_0)\int_{t_0}^{t}\frac{\mathbb{E}(\delta\beta^2)}{t^2}\mathd t,\qquad t>t_0,\\
&=&(t-t_0)\int_{t_0}^{t}\frac{k}{t}\mathd t,
\end{eqnarray*}
hence
\[
\mathbb{E}\bigg\{\left[\log\left(\frac{F(t_0)t_0^{\betaconst}}{F(t)t^{\betaconst}}\right)\right]^2\bigg\} 
\leq k (t-t_0)\log(t/t_0).
\]
i.e. Equation~\eqref{eq:beta_model3} in the introduction.  Moreover, the trend towards the classical scaling behaviour $\betaconst=1/3$ and 
\[
\mathbb{E}\bigg\{\left[\log\left(\frac{F(t_0)t_0^{\betaconst}}{F(t)t^{\betaconst}}\right)\right]^2\bigg\} =0
\]
is in evidence as $N$ increases.  In particular, the $k$-value halves with a doubling of $N$, suggestive of $k\sim N^{-1}$, and hence $[F(t)t^{1/3}]/[F(t_0)t_0^{1/3}]=1$ as $N\rightarrow\infty$.  This is consistent with the conjectured bound in Equation~\eqref{eq:kohn_maybe}.

We conclude this section by looking at the probability distribution function of $\beta$, thus addressing {\textbf{Question 2}} in the introduction.  Since $\mu_2$ depends on time, the distribution of $\beta$ is not stationary -- this is reinforced by the fact that the higher moments  $\mu_4$ (not shown) also has a systematic variation with time.  As such, it is appropriate only to plot a space-time evolution  of the histogram of $\beta$ -- this is shown in Figure~\ref{fig:hist8192_beta}.
\begin{figure}[htb]
    \centering
     \includegraphics[width=0.6\textwidth]{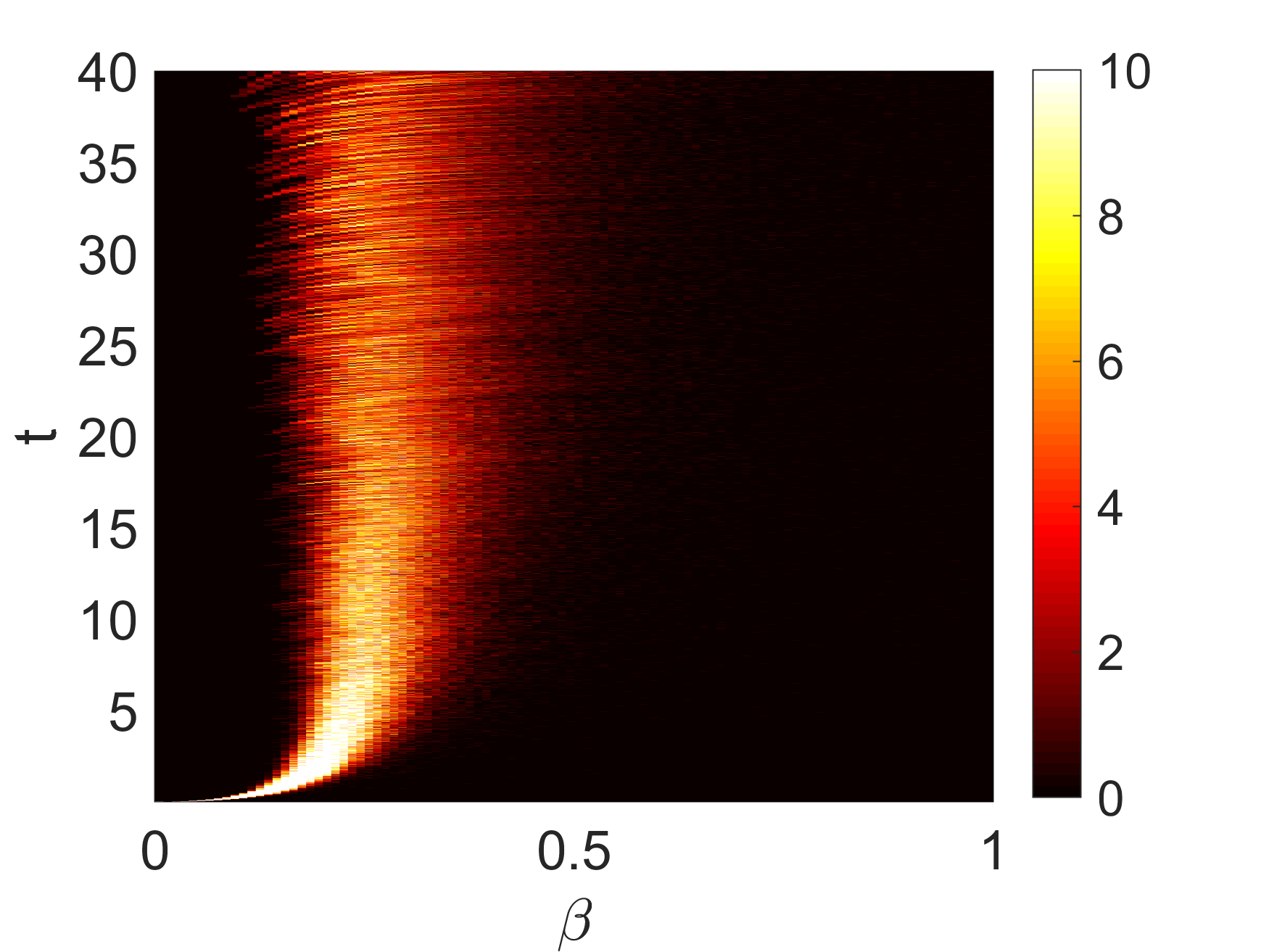}
				\caption{Spacetime plot of the histogram of $\beta$ for the case $N=8192$.   }
    \label{fig:hist8192_beta}
\end{figure}
However, since the variance $\mu_2$ associated with the histograms in Figure~\ref{fig:hist8192_beta} decreases with increasing $N$ (Table~\ref{tab:stats}), it can be inferred that the histogram of $\beta$ does approach a delta function as $N\rightarrow\infty$ -- only the approach is not self-similar, such that the histogram retains a time-dependent form for all finite values of $N$.

\section{Model 2 -- Numerical Simulations}
\label{sec:CH}

In this section we solve Model 2  numerically in an ensemble of $M$ simulations.  We thereby build up a statistical picture of $\beta$.  We recall Model 2 as the Cahn--Hilliard equation~\eqref{eq:ch}.  We solve Model 2 in two spatial dimensions, using periodic boundary conditions.  As such, the following mean concentration is conserved:
\begin{equation}
\langle C\rangle=\frac{1}{|\Omega|}\int_{\Omega}C(\vecx,t)\mathd^2 x.
\end{equation}
In this section, we start by looking at asymmetric mixtures, corresponding to $\langle C\rangle \neq 0$.  This corresponds closely to LSW theory and hence, to Model 1.  Thereafter, we look also at symmetric mixtures with $\langle C\rangle\neq 0$.  The morphology of symmetric mixtures does not fit into the LSW framework, however, the scaling hypothesis in Equation~\eqref{eq:kohn_maybe} applies equally well to either scenario, therefore, it is worthwhile to study both.

We emphasize that although the droplet dynamics, LSW theory, and the underlying connection to the Cahn--Hilliard equation (Sections~\ref{sec:intro}--\ref{sec:drop_pop}) are defined for $D=3$ dimensions only, there is an analogous quantitative theory for $D=2$ (see Reference~\cite{rogers1989numerical}).  The outcome of that quantitative theory is again a self-similar dropsize distribution whose form is similar to that already explored in Section~\ref{sec:drop_pop}.  Therefore, the results of Section~\ref{sec:drop_pop} (for $D=3$) will carry over in a suitable qualitative sense to asymmetric mixtures in the present section (for $D=2$).  

\subsection{Methodology}

The Cahn--Hilliard equation~\eqref{eq:ch} is solved numerically using a semi-implicit Alternating Direction Implicit (ADI) finite-difference method, with periodic boundary conditions in both spatial directions.    The computational grid has $n$ gridpoints in each spatial direction, and the time-step is denoted by $\Delta t$.
The size of the physical domain is $L$, such that $|\Omega|=L^2$.
The numerical algorithm is implemented in the CUDA programming language, where each direction of the ADI scheme is parallelised using the methodology presented in References~\cite{gloster2019cupentbatch,gloster2019custen}.  The resulting computer code is implemented on Graphics Processing Units (GPUs) -- specifically, 
two NVIDIA Titan X GPUs with 12GB of RAM each and an Intel i7-6850K CPU with 6 3.60GHz hyper–threaded cores; the host machine is equipped with 128GB of RAM.  The operating system is Ubuntu 18.04 LTS and we use CUDA v10.1.  
The GPU uses the CUDA MPS server to solve a large ensemble of simulations ($M=1024$) in batches on the GPUs.  
The numerical method (with validations) is described in detail in References~\cite{gloster2019cupentbatch,gloster2019custen}.

In the present article, we use the use the value $\gamma= 0.01$ throughout.  As such, numerical convergence of the method is achieved with a spatial resolution of $\Delta x=2\pi/256$, and a timestep of $\Delta t=0.0012$.  This resolution is kept constant throughout the course of our investigations, while the domain size $L$ is systematically varied.  Quantitative evidence of numerical convergence with these parameters is given at a key point towards the end of this section, while the numerical convergence is further investigated in a systematic way in Appendix~\ref{sec:app:cvg}.  
Finally, the initial condition is set as
\[
C(\vecx,t=0)=0.5+0.1[r(\vecx)-1],\qquad \text{Asymmetric Mixture},
\]
and
\[
C(\vecx,t=0)=r(\vecx),\qquad \text{Symmetric Mixtures}.
\]
Here, $r(\vecx)$ is a random number drawn from the uniform random distribution, with $0\leq r(\vecx)\leq 1$; the random numbers  at the different points $\vecx\in\Omega$ are all drawn from independent identical distributions. 

\subsection{Results -- Asymmetric Mixture}

Snapshots of $C(\vecx,t)$ at different times are shown in Figure~\ref{fig:oswaldRipening} for a typical simulation of an asymmetric mixture ($L=2\pi$).  It can be seen in this illustrative example how the system evolves: $C(\vecx,t)$ rapidly evolves to a configuration of many small droplets, these undergo Ostwald ripening such that only a few large droplets survive;  eventually the system will consist of only a single droplet, corresponding to the extreme finite size effects already identified in Section~\ref{sec:drop_pop}.
\begin{figure}
\centering
\subfigure[$\,\,t \approx  6.14$]{\includegraphics[width=0.22\textwidth]{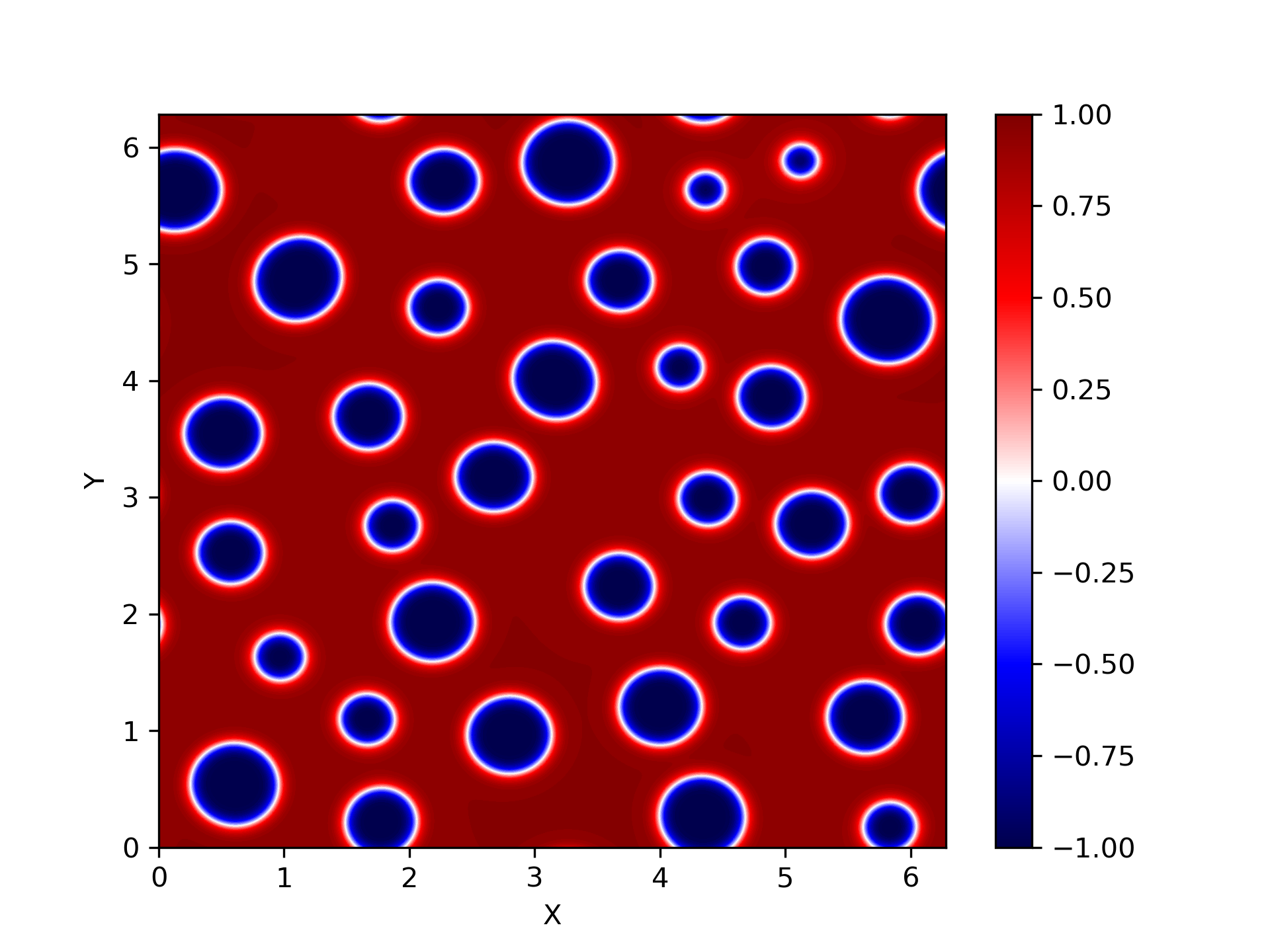}}
\subfigure[$\,\,t \approx 12.52$]{\includegraphics[width=0.22\textwidth]{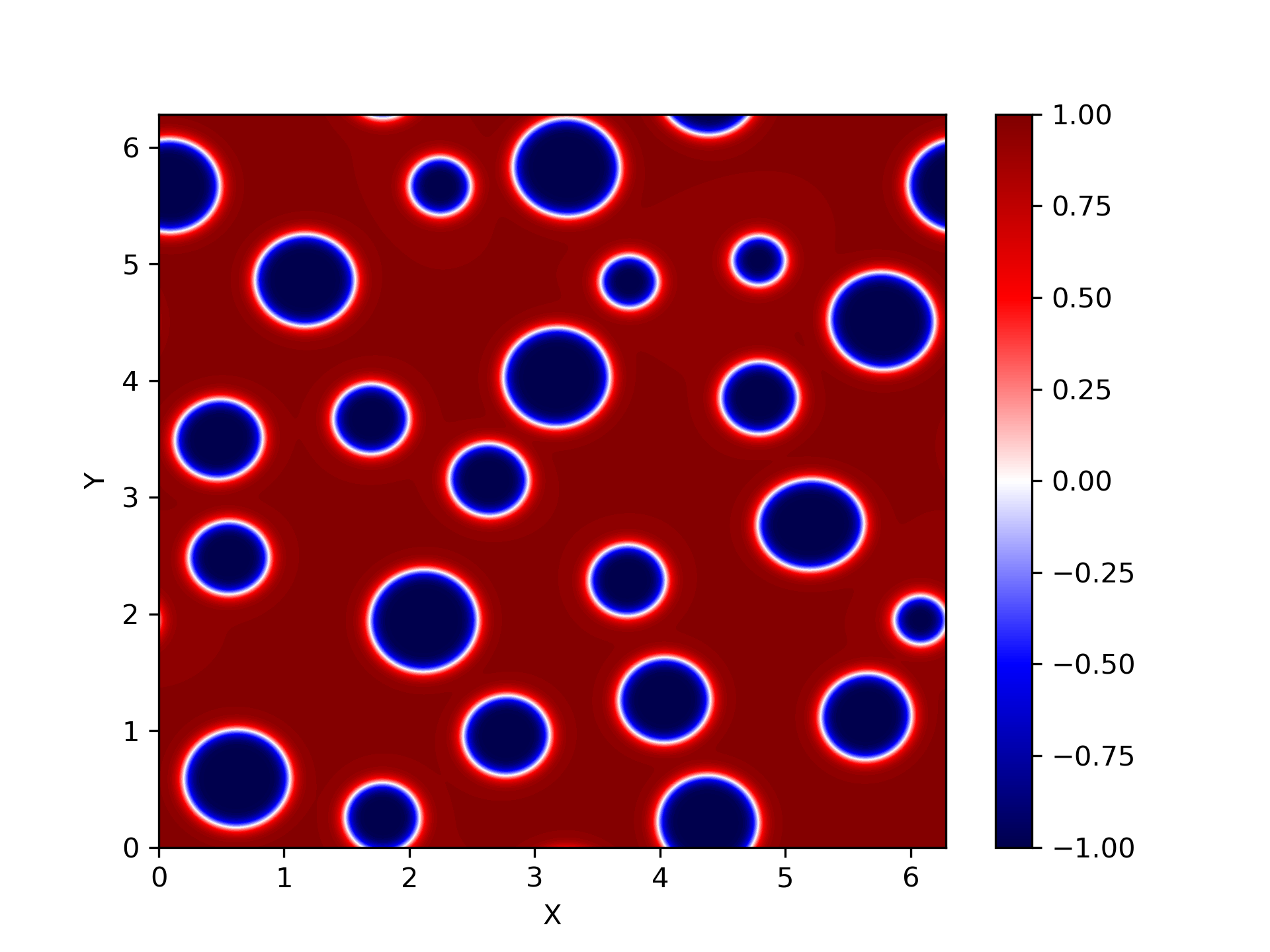}}
\subfigure[$\,\,t \approx 24.79$]{\includegraphics[width=0.22\textwidth]{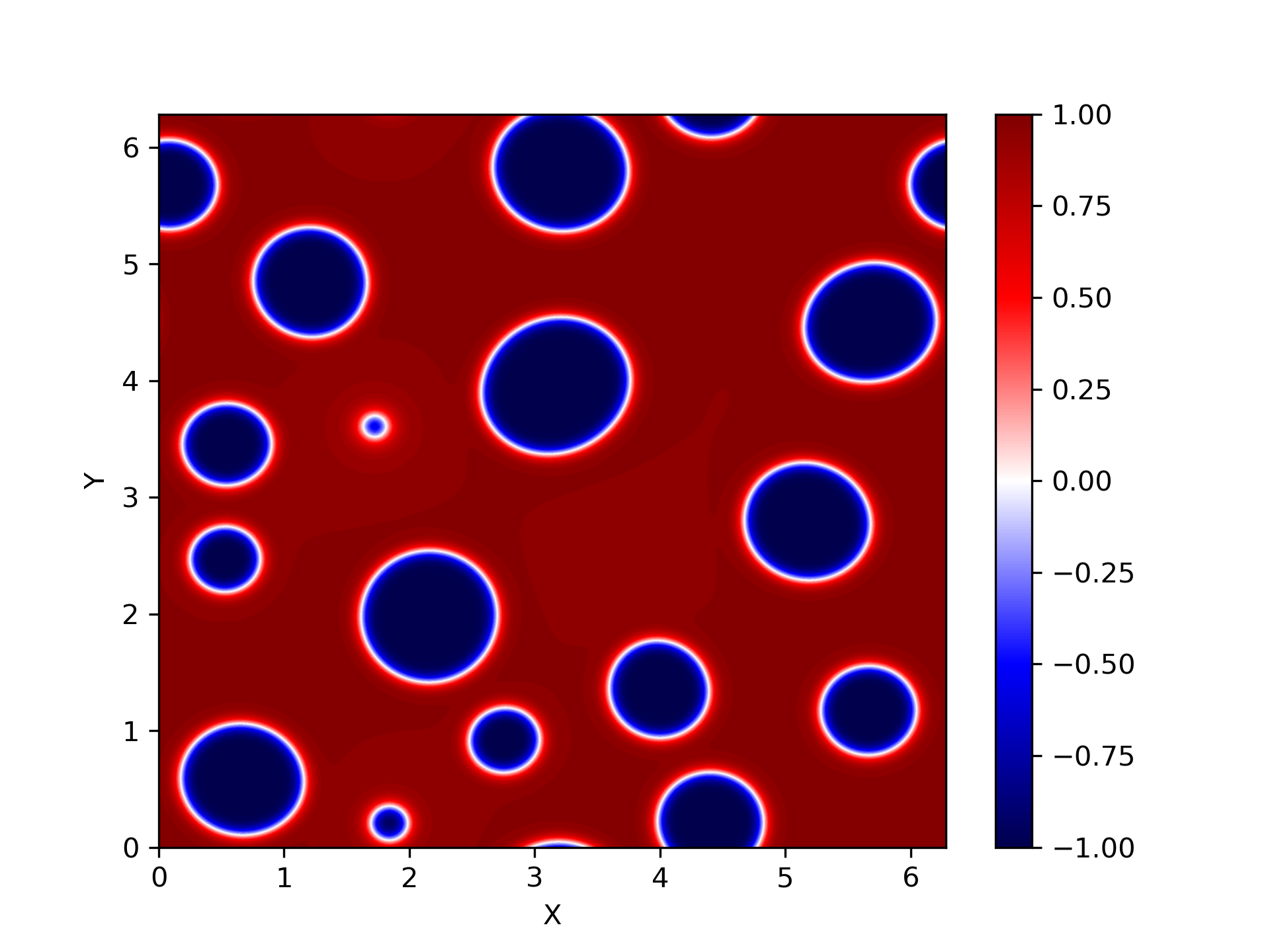}}
\subfigure[$\,\,t \approx 49.33$]{\includegraphics[width=0.22\textwidth]{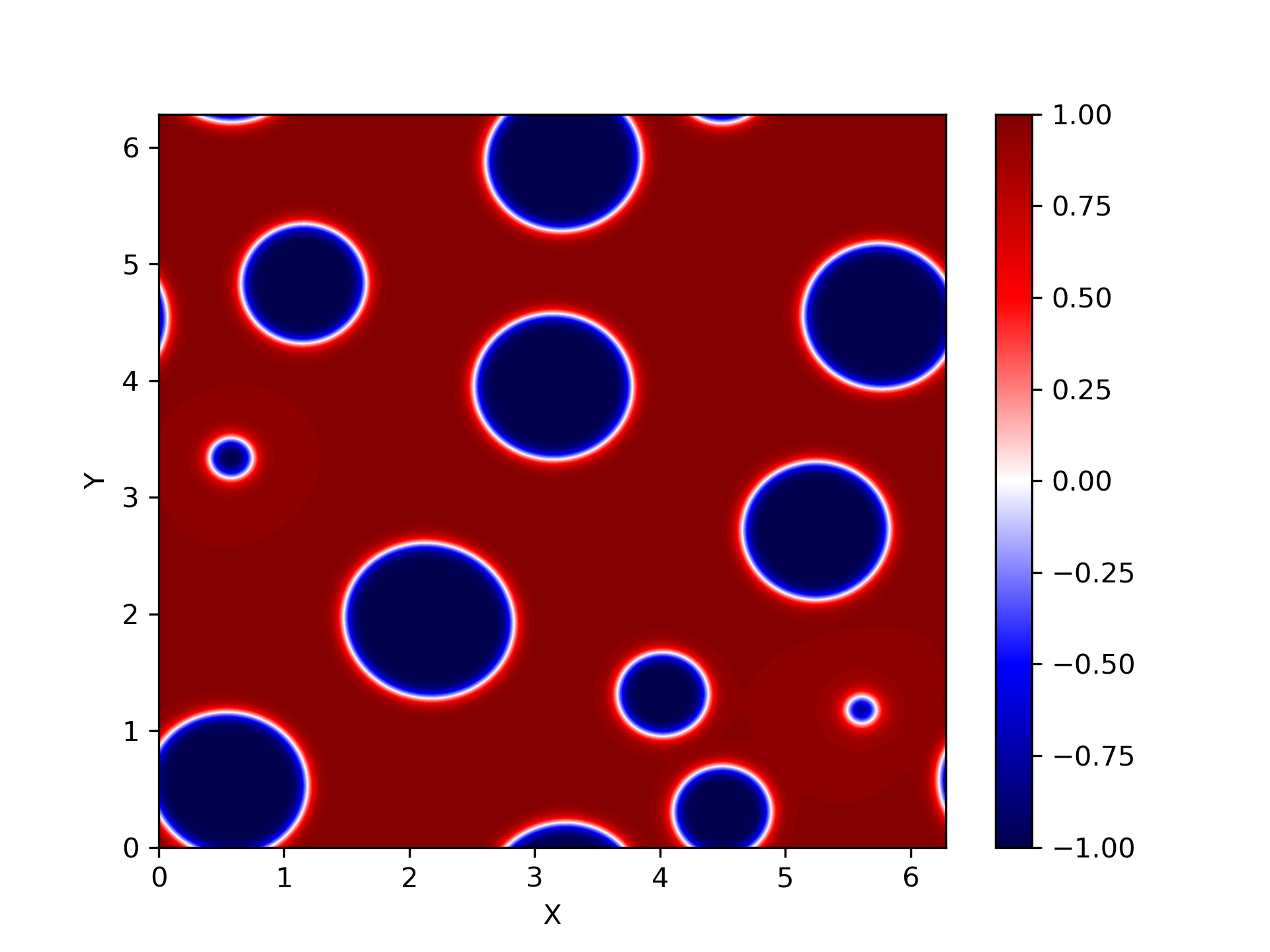}}
\caption{Contour plots of the Cahn--Hilliard equation with asymmetric mixture showing Oswald Ripening on a domain of size $2\pi \times 2\pi$.}
\label{fig:oswaldRipening}
\end{figure}
A time series of $\beta$ for a single simulation ($L=16\pi$) is shown in Figure~\ref{fig:2Dbeta}.  
\begin{figure}
	\centering
		\includegraphics[width=0.6\textwidth]{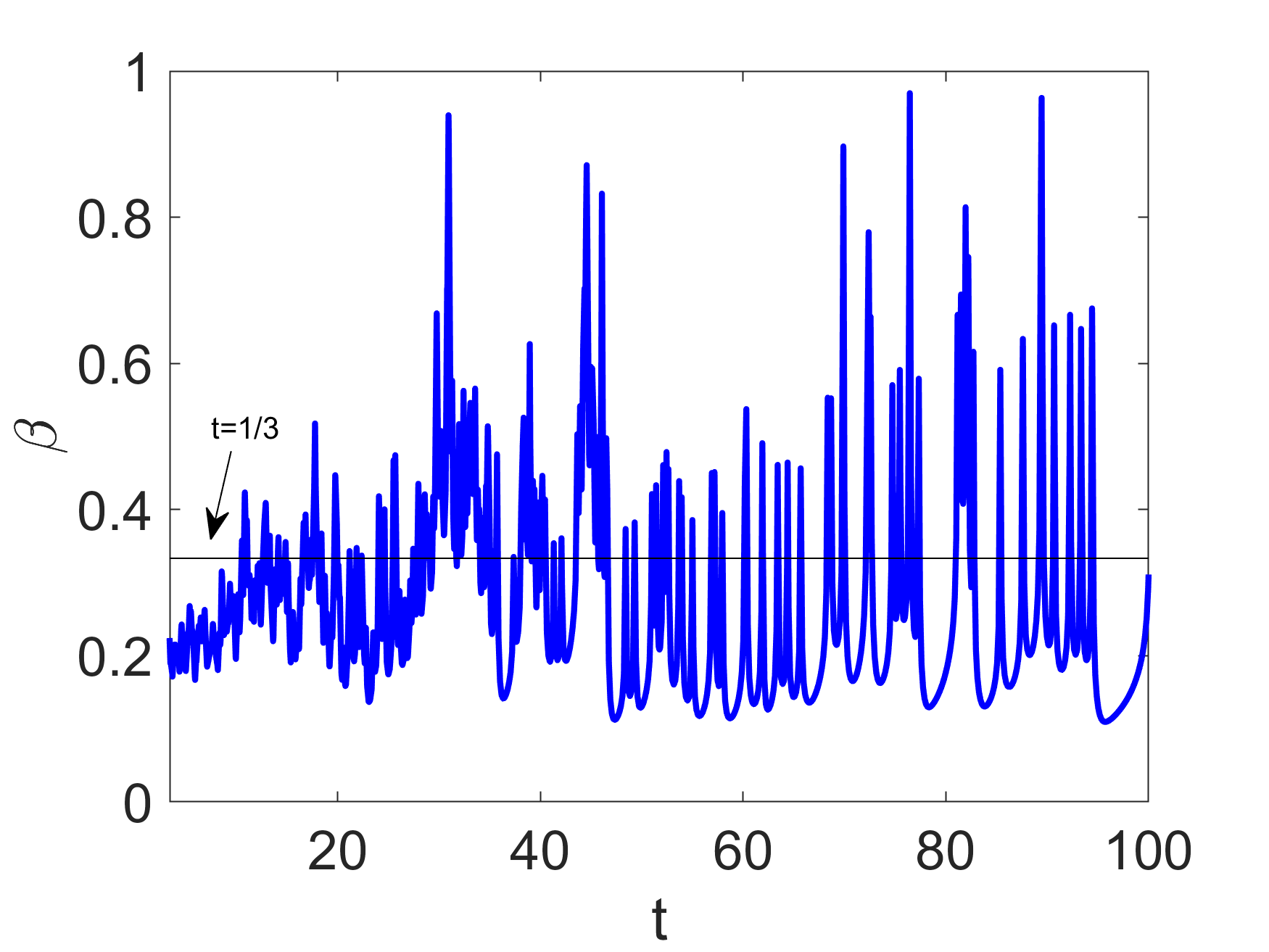}
		\caption{Time series of $\beta(t)$ for a single typical simulation of the Cahn--Hilliard equation (asymmetric mixture, $L=16\pi$).  The line $\beta(t)=1/3$ is added  for reference.}
	\label{fig:2Dbeta}
\end{figure}
This is seen to be qualitatively similar to the corresponding time series for Model 1 (e.g. Section~\ref{sec:drop_pop}, Figure~\ref{fig:beta_N10000}).

We perform a campaign of numerical simulations to characterize the coarsening rate $\beta$ in the Cahn--Hilliard equation.  
As $\beta=\betaconst+\delta\beta$ is a characterization of an entire droplet population (i.e. a single simulation of the Cahn--Hilliard equation in a domain $\Omega$), it is necessary to gather statistics of $\beta$ across ensembles of many such simulations -- see Table~\ref{tab:ensemblesCH}.
\begin{table}[htb]
\centering
\begin{tabular}{|c|c|c|} 
\hline
    Ensemble  & $|\Omega|=L^2$ & Number of simulations  \\
		Number    & $L$    &         in ensemble            \\
    \hline
		\hline
    1 & $2\pi$  & $M=1024$ \\
    2 & $4\pi$   & $M=1024$ \\
    3 & $8\pi$   & $M=1024$ \\
    4 & $16\pi$  & $M=1024$ \\
		\hline
\end{tabular}
\caption{Explanation of the scheme for constructing the ensemble of simulations.  We keep $M=1024$ fixed throughout our investigations, however, we vary $L$ systematically to explore finite-size effects.  The timestep is kept fixed throughout as $\Delta t=0.0012$.  Also, the spatial resolution is kept fixed throughout as $\Delta x=2\pi/512$.  Hence, the number of gridpoints in a simulation of size $L$ is $n=(L/\Delta x)$, with $n=\{512,1024,2048,4096\}$.}
\label{tab:ensemblesCH}
\end{table}
The statistical moments of the ensemble data are generated -- again in accordance to Equation~\eqref{eq:mup}.  The results of this analysis for $\mu_1$ and $\mu_2$ are shown in time-series form in Figure~\ref{fig:moments_CH}.
\begin{figure}[htb]
	\centering
	\subfigure[\,\,\text{Mean}]{\includegraphics[width=0.45\textwidth]{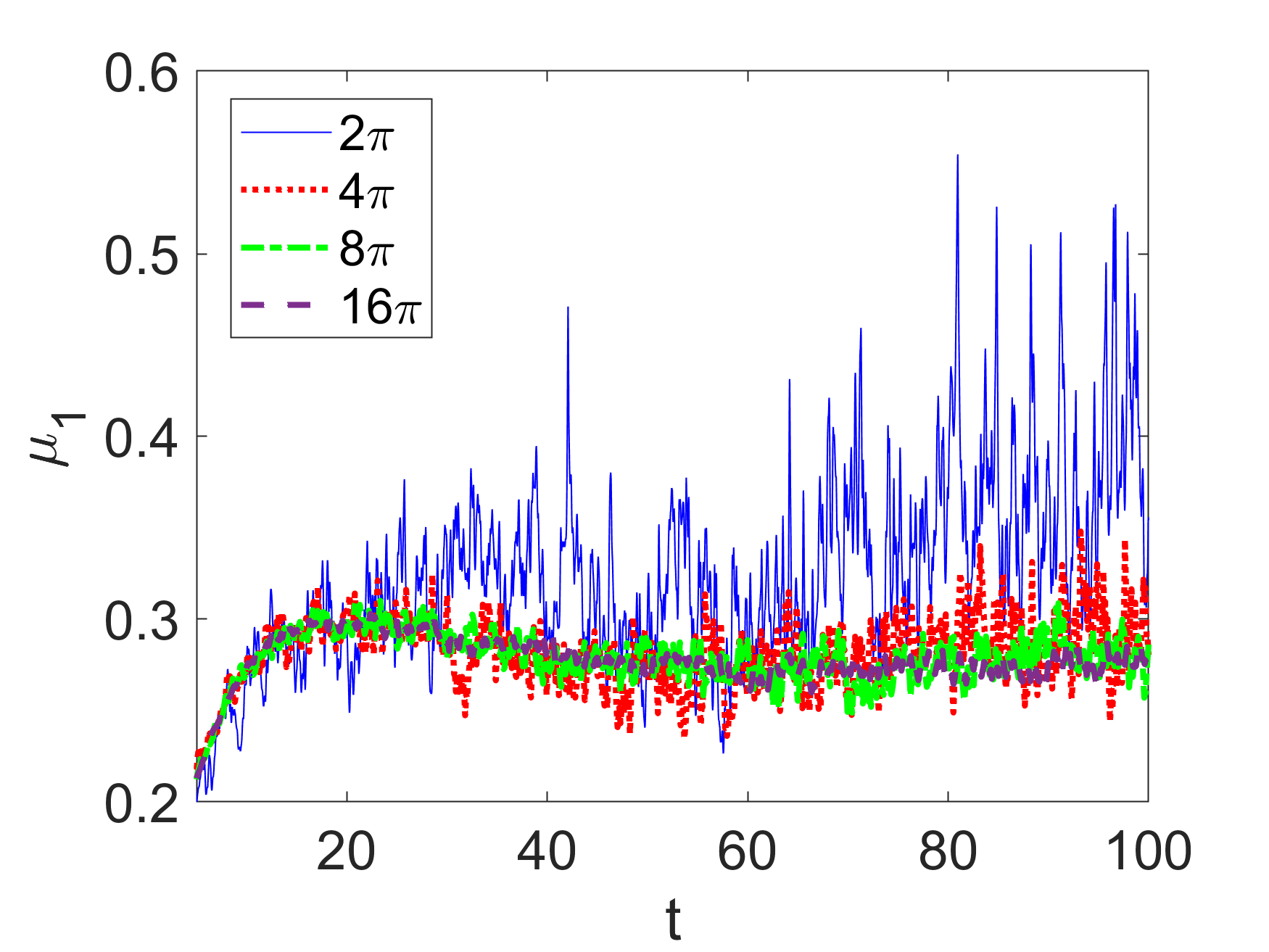}}
	\subfigure[\,\,\text{Variance}]{\includegraphics[width=0.45\textwidth]{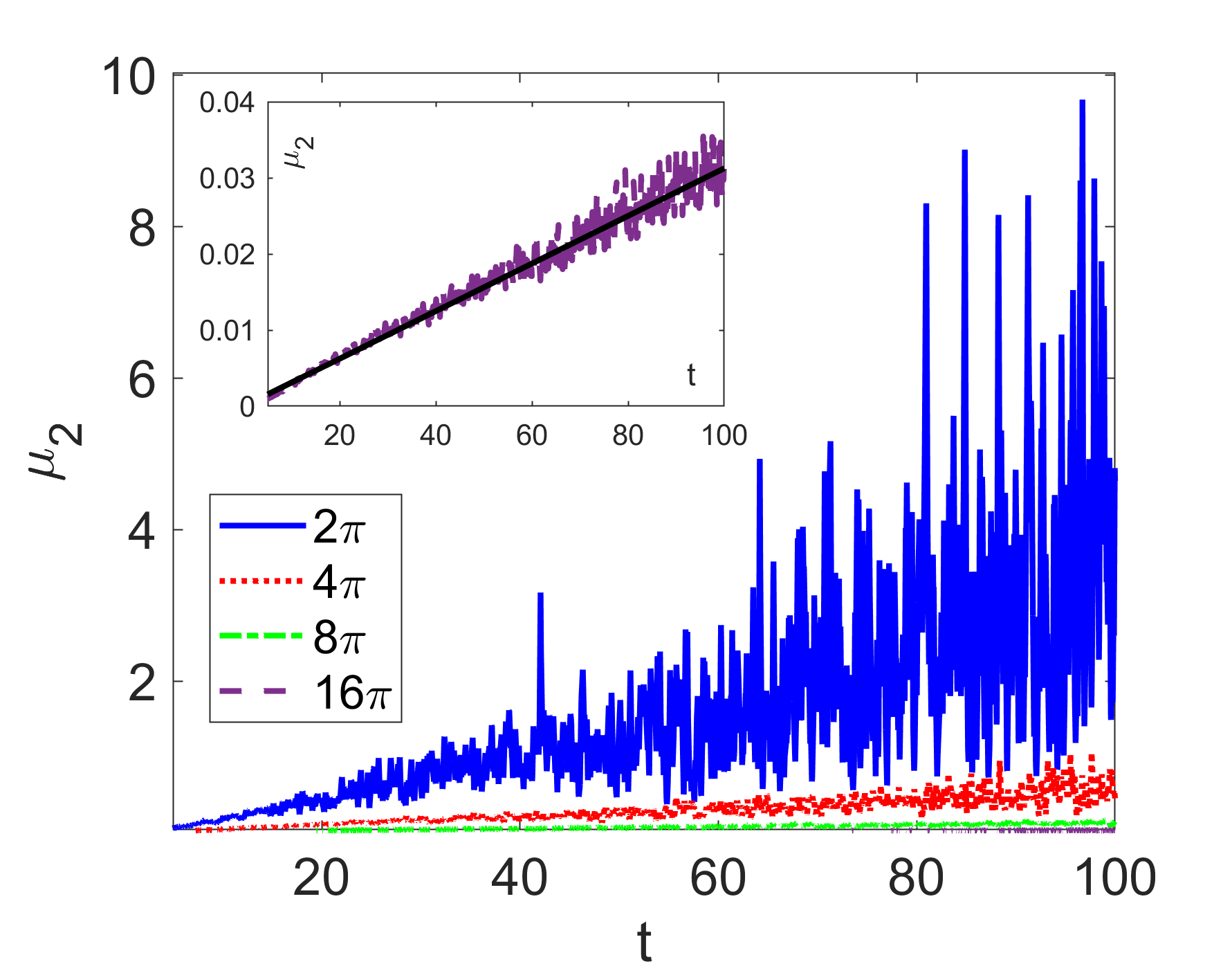}}
	\caption{Plots showing time series of  $\mu_1$ and $\mu_2$ for asymmetric mixtures.  Moving averages are shown to guide the eye.  The inset in (b) shows the $L=16\pi$ case on an enlarged scale, with a least-squares fit $\mu_2(t)=kt$ to highlight the systematic drift in $\mu_2(t)$, and to guide the eye.}
	\label{fig:moments_CH}
\end{figure}
The trends are consistent with what was observed already in in Model 1 -- specifically, $\beta(t)$ fluctuates around a constant value $\beta_0$, and the amplitude of the fluctuations increases over time.  The amplitude of the fluctuations is quantified by $\mathbb{E}(\delta\beta^2)$, where $\delta\beta=\beta(t)-\beta_0$.  The amplitude of the fluctuations is seen to increase linearly in time (e.g. inset to Figure~\ref{fig:moments_CH}).  This may be modelled as $\mathbb{E}(\delta\beta^2)\propto t$; the constant of proportionality may be estimated by least-squares fitting of $\mu_2(t)$ to a trend-line $\mu_2=kt$.  The results of applying this statistical model to the data are summarized in Table~\ref{tab:stats_CH}; this is again consistent with the statistical description of Model 1 in Section~\ref{sec:drop_pop}.
\begin{table}
\centering
\begin{tabular}{|c|c|c|} 
\hline
    {$L$} & $\frac{1}{80}\int_{20}^{100}\mu_1(t)\,\mathd t$ & $k$ \\
    \hline
		\hline
    $2\pi$  & 0.33      & 0.036  	\\
    $4\pi$  & 0.28      & 0.0056   \\
    $8\pi$  & 0.28      & 0.0013 	\\
    $16\pi$ & 0.28      & 0.00031		\\
		\hline
\end{tabular}
\caption{Estimates of $\mathbb{E}(\beta)=\betaconst$ and $\mathbb{E}(\delta\beta^2)$ for Model 2 (asymmetric mixtures), for various problem sizes $L$.  The value of $k$ is obtained from least-squares fitting on the data between $t=20$ and $t=100$.}
\label{tab:stats_CH}
\end{table}

From Table~\ref{tab:stats_CH}, it can be seen that the value of $k$ is domain-dependent; it can be further seen that $k\sim L^{-2}$ (see also Figure~\ref{fig:ksimL}).
\begin{figure}[htb]
	\centering
		\includegraphics[width=0.6\textwidth]{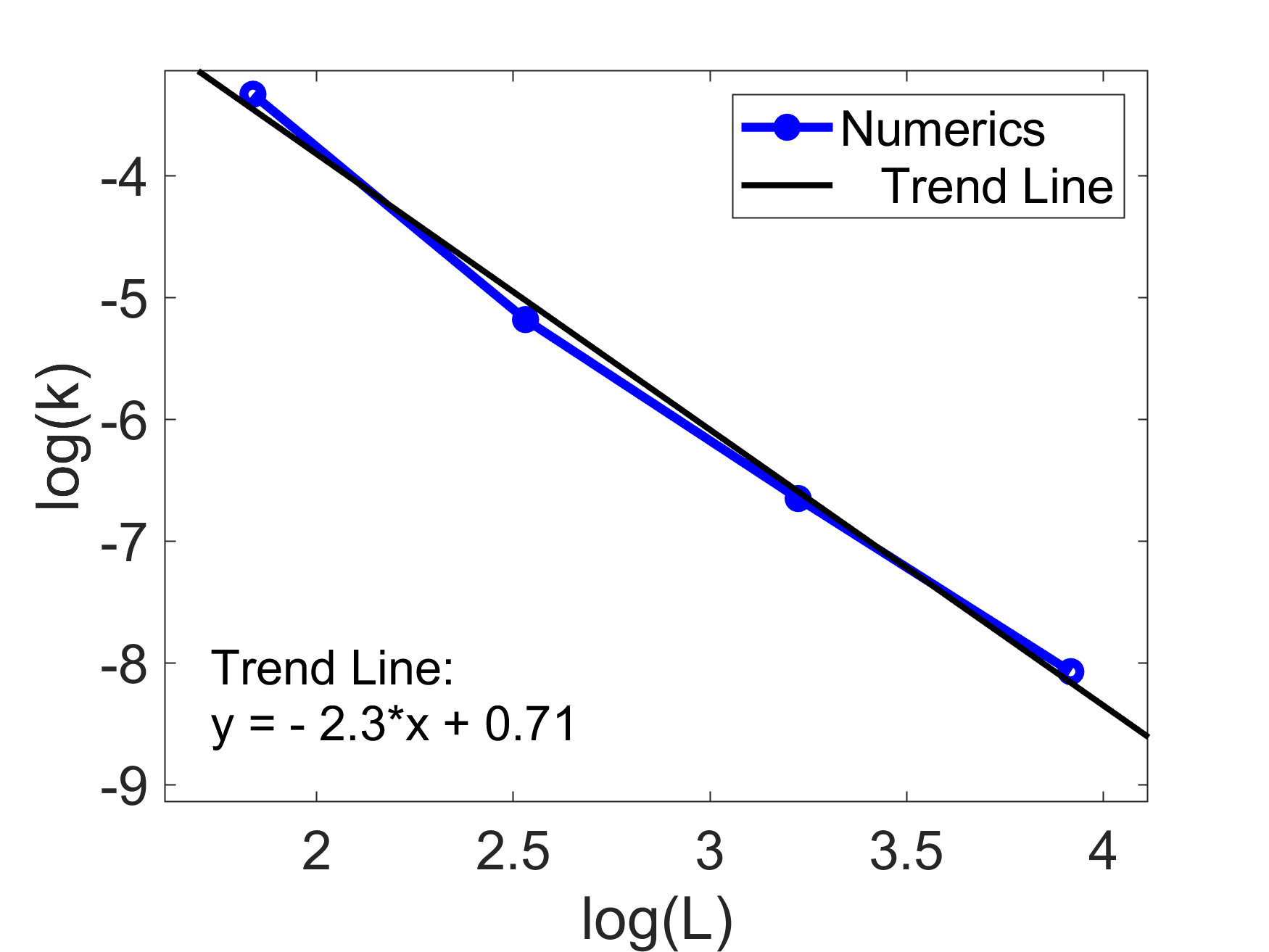}
		\caption{Plot showing the dependence of the value of $k$ on the problem size $L$.  The values of $k$ are obtained directly from Table~\ref{tab:stats_CH}.}
	\label{fig:ksimL}
\end{figure}
Hence, referring back to the arguments in the introduction, the numerical evidence suggests that in the limit as $L\rightarrow\infty$,  $[F(t)t^{\beta_0}]/[F(t_0)t_0^{\beta_0}]=1$.  In contrast to Model 1 (Section~\ref{sec:drop_pop}), the asymptotic value of $\beta_0$ implied by Table~\ref{tab:stats_CH} is strictly less than $1/3$.  We emphasize that this is  consistent with the hypothesis in Equation~\eqref{eq:kohn_maybe}.

We conclude this section by looking at the probability distribution function of $\beta$ for asymmetric mixtures and the Cahn--Hilliard equation, thus addressing {\textbf{Question 3}} in the introduction. 
As in the droplet population model (Section~\ref{sec:drop_pop}), the distribution of $\beta$ is not stationary -- this is 
evidenced by the fact that $\mu_2$ increases linearly with time (Table~\ref{tab:stats_CH}).  The same trend can also be seen in $\mu_4$ (not shown).
As such, it is appropriate only to plot a space-time evolution  of the histogram of $\beta$ -- this is shown in Figure~\ref{fig:hist_asymmetric}, for the case $L=16\pi$.
The drift in the variance $\mu_2$ is evidenced by the increase in the positive tail of the histogram at late times.  
However, since the variance $\mu_2$ associated with the histograms in Figure~\ref{fig:hist8192_beta} decreases as $L^{-2}$ (Figure~\ref{fig:ksimL}), it can be inferred that the histogram of $\beta$ does approach a delta function as $L\rightarrow\infty$ -- only the approach is not self-similar, such that the histogram retains a time-dependent form for all finite values of $L$. 
This is a very similar scenario to the one on show in the discrete droplet population model in Section~\ref{sec:drop_pop}.
\begin{figure}[H]
    \centering
     \includegraphics[width=0.6\textwidth]{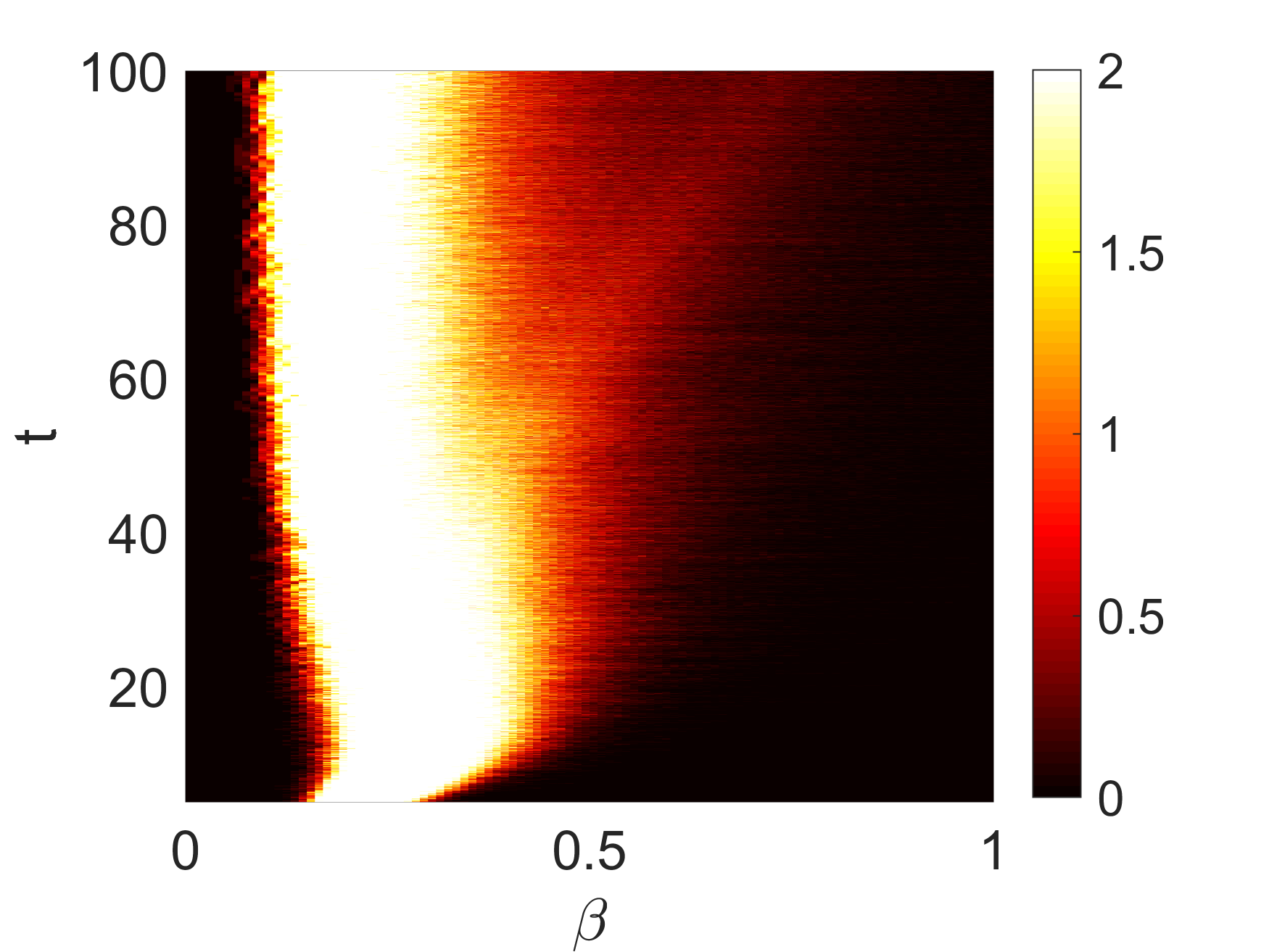}
				\caption{Spacetime plot of the histogram of $\beta$ for asymmetric mixtures and the Cahn--Hilliard equation, for the case $L=16\pi$.     }
    \label{fig:hist_asymmetric}
\end{figure}

\subsection{Results -- Symmetric Mixtures}

\begin{figure}[htb]
\centering
\subfigure[$\,\,t\approx 0.61$]{\includegraphics[width=0.22\textwidth]{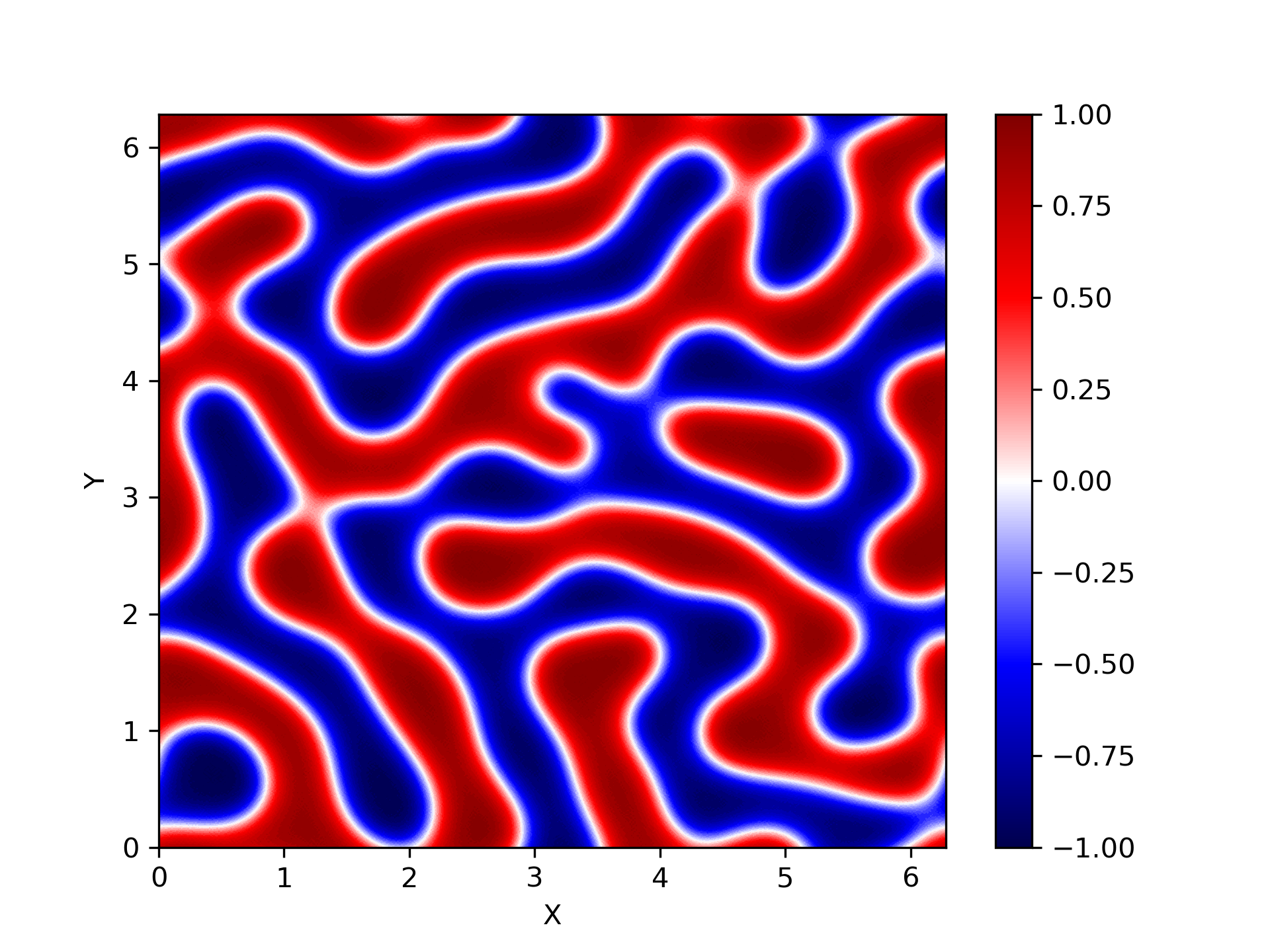}}
\subfigure[$\,\,t\approx 3.80$]{\includegraphics[width=0.22\textwidth]{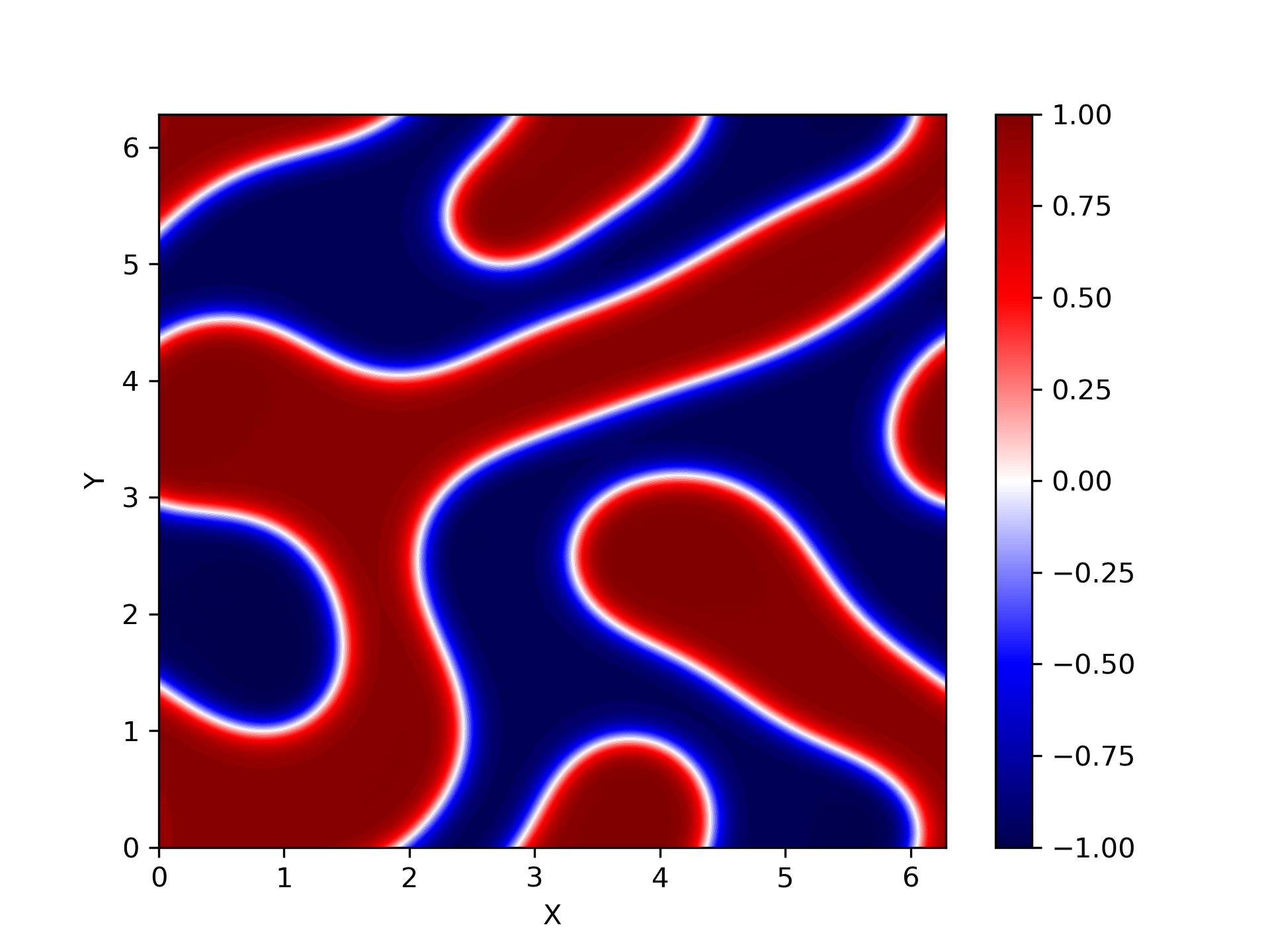}}
\subfigure[$\,\,t\approx 14.8$]{\includegraphics[width=0.22\textwidth]{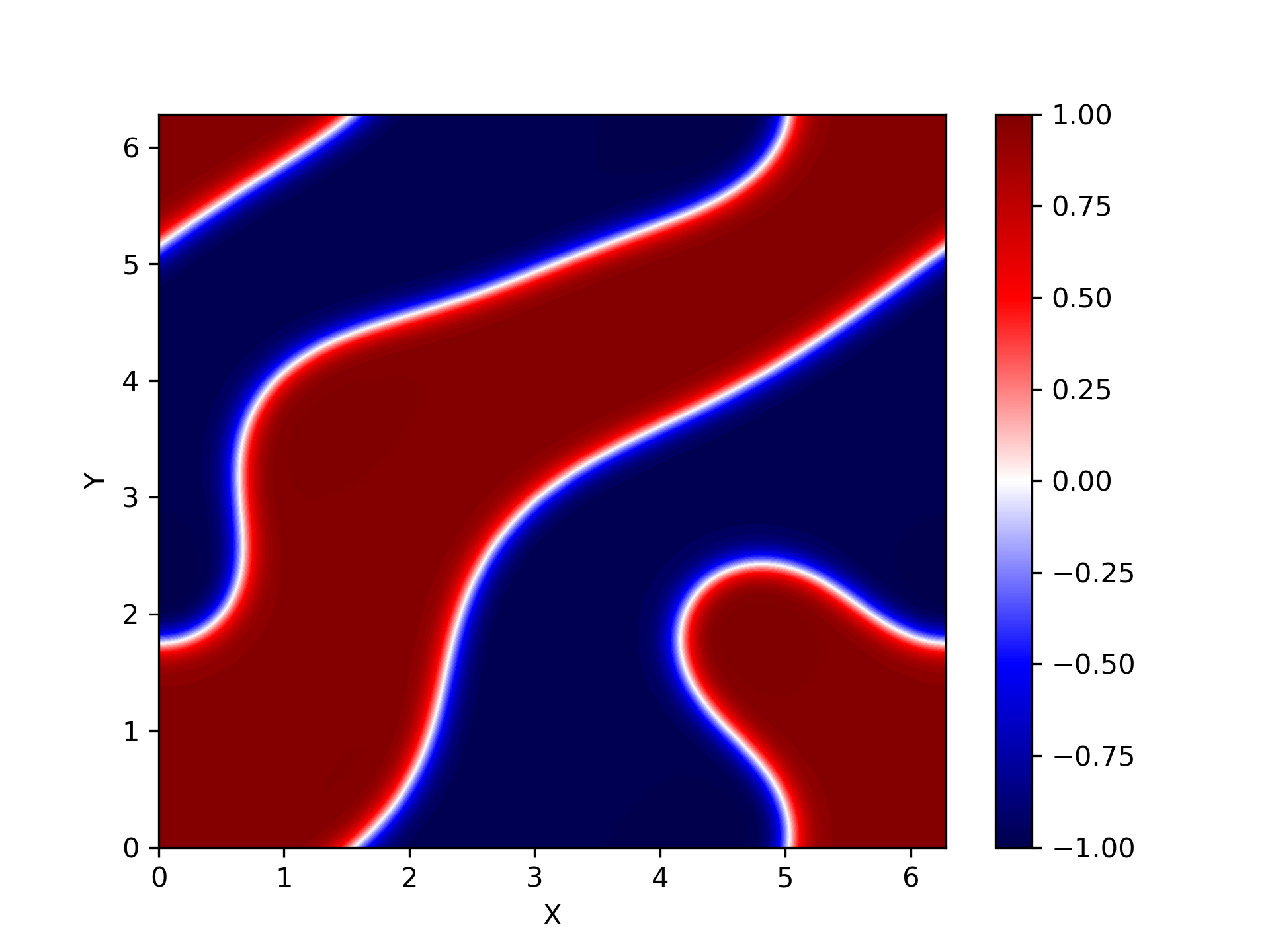}}
\subfigure[$\,\,t\approx 61.5$]{\includegraphics[width=0.22\textwidth]{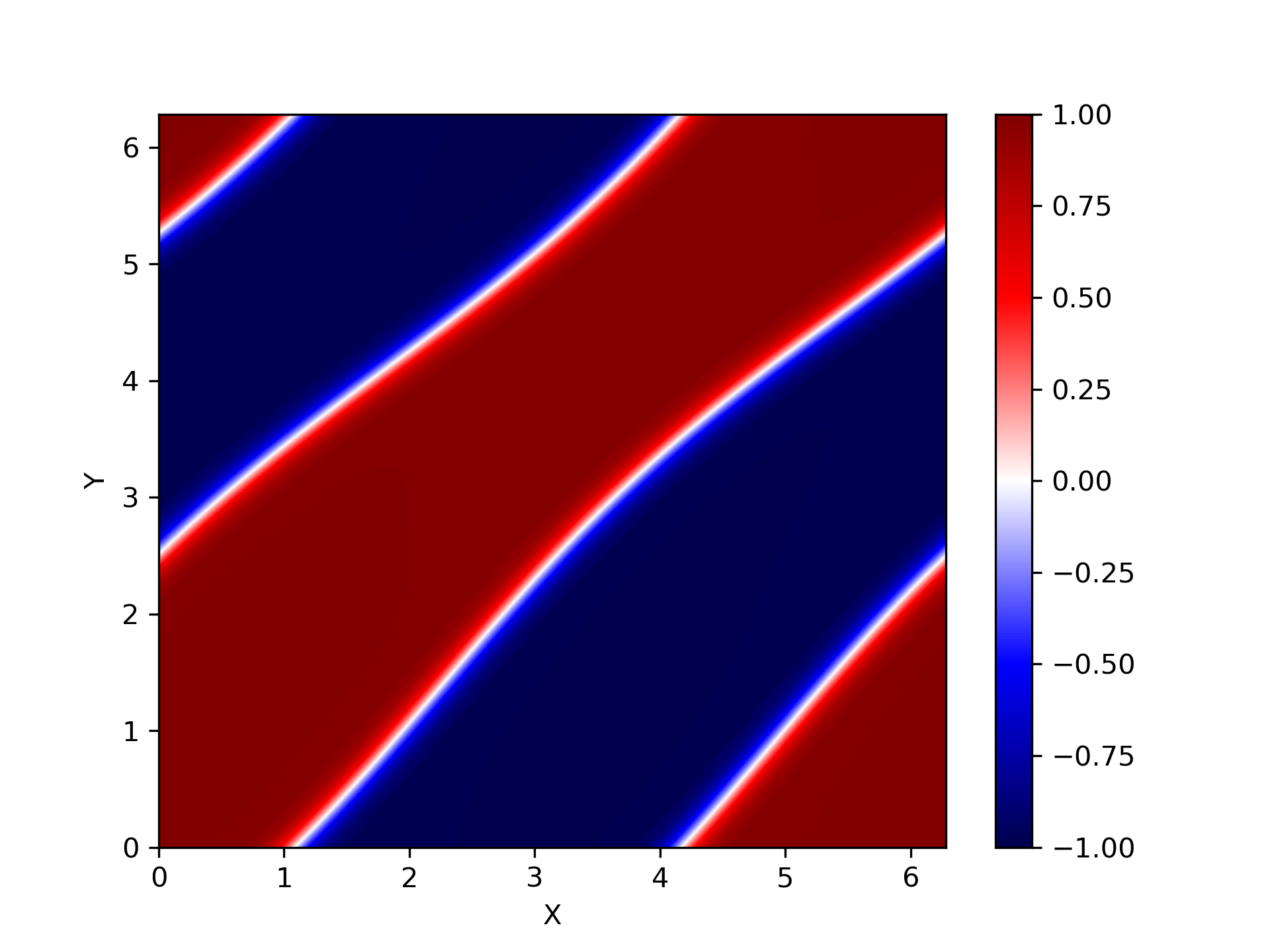}}
\caption{Sample snapshots of the Cahn--Hilliard equation for a symmetric mixture showing interconnected regions on a domain of size $2\pi \times 2\pi$. Finite-size features are apparent in two lower panels.}
\label{fig:contourSymmetric}
\end{figure}

We next look at symmetric mixtures, corresponding to $\langle C\rangle=0$. 
Snapshots of $C(\vecx,t)$ at different times are shown in Figure~\ref{fig:contourSymmetric} for a typical simulation of a symmetric mixture ($L=2\pi$).   The figure shows that $C(\vecx,t)$ rapidly evolves to form an interconnected domain structure: this is greatly in contrast to the symmetric case, where the minority phase rapidly forms droplets embedded in the majority phase.  Extreme finite-size effects are in evidence in late time, where the domain structures extend across the length of the container volume.  The onset of such extreme finite-size effects can be delayed by increasing the domain size beyond $L=2\pi$.  As such,
a time series of $\beta$ for a single simulation ($L=16\pi$) is shown in Figure~\ref{fig:2Dbeta_symmetric}.  
At late times (before the onset of extreme finite-size effects), the trend is for $\beta$ to remain constant at a value $\beta<1/3$ for long intervals, followed by sharp jumps where $\beta$ exhibits a `spike' -- these are found in the simulations to correspond to domain death (where small domains disappear via an Ostwald-ripening-type process, only to be reabsorbed into larger domains), as well as to domain merger events. 

\begin{figure}[htb]
	\centering
		\includegraphics[width=0.6\textwidth]{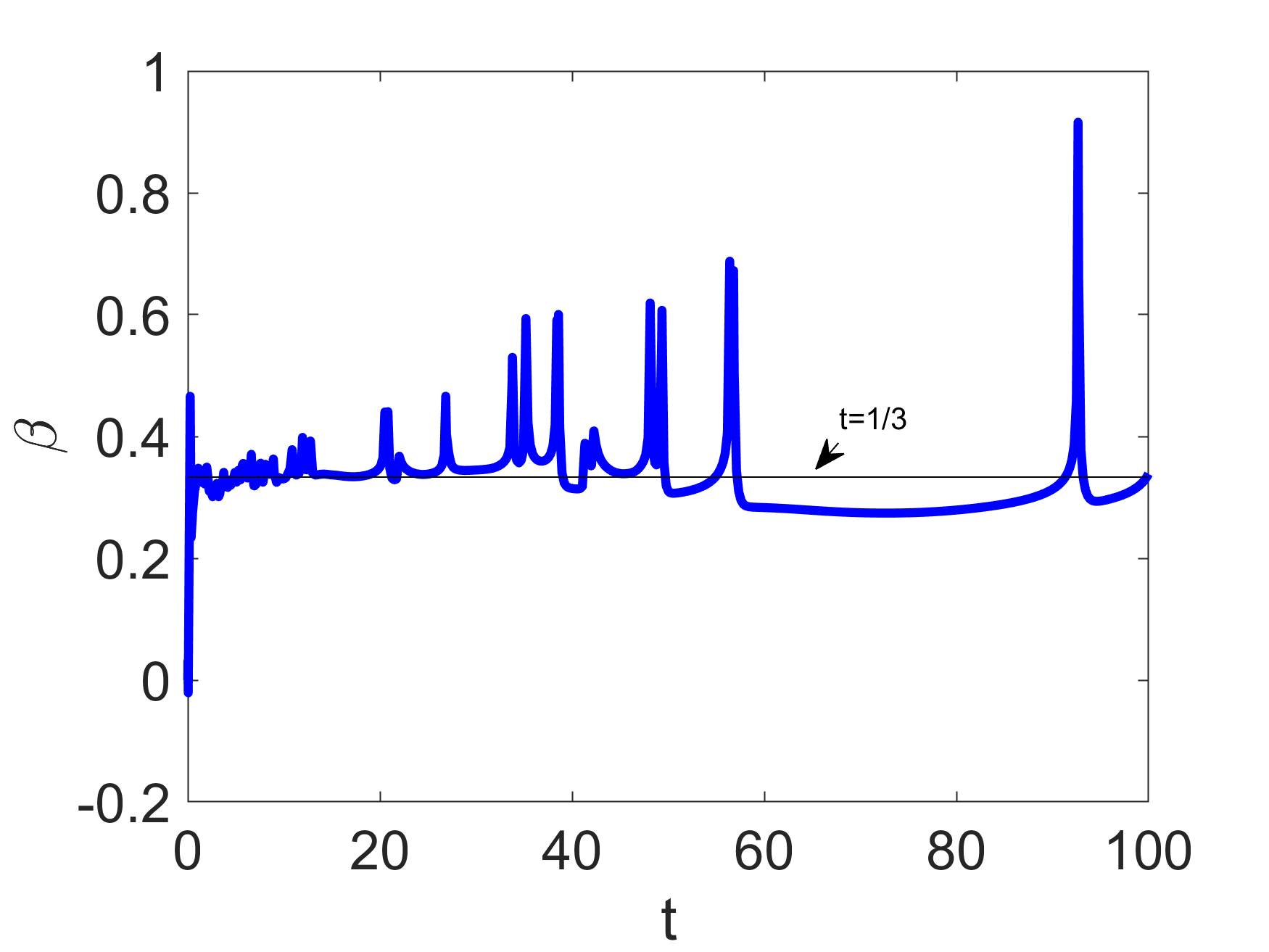}
		\caption{Time series of $\beta(t)$ for a single typical simulation of the Cahn--Hilliard equation (symmetric mixture, $L=16\pi$).  The line $\beta(t)=1/3$ is added  for reference.}
	\label{fig:2Dbeta_symmetric}
\end{figure}

We again perform a campaign of numerical simulations to characterize the coarsening rate $\beta$ in the Cahn--Hilliard equation.   An ensemble of simulations is constructed, in identical fashion to that already outlined in  Table~\ref{tab:ensemblesCH}.  In this way, the statistics of the coarsening rate $\beta$ are obtained.
\begin{figure}[htb]
	\centering
	\subfigure[\,\,\text{Mean}]{\includegraphics[width=0.45\textwidth]{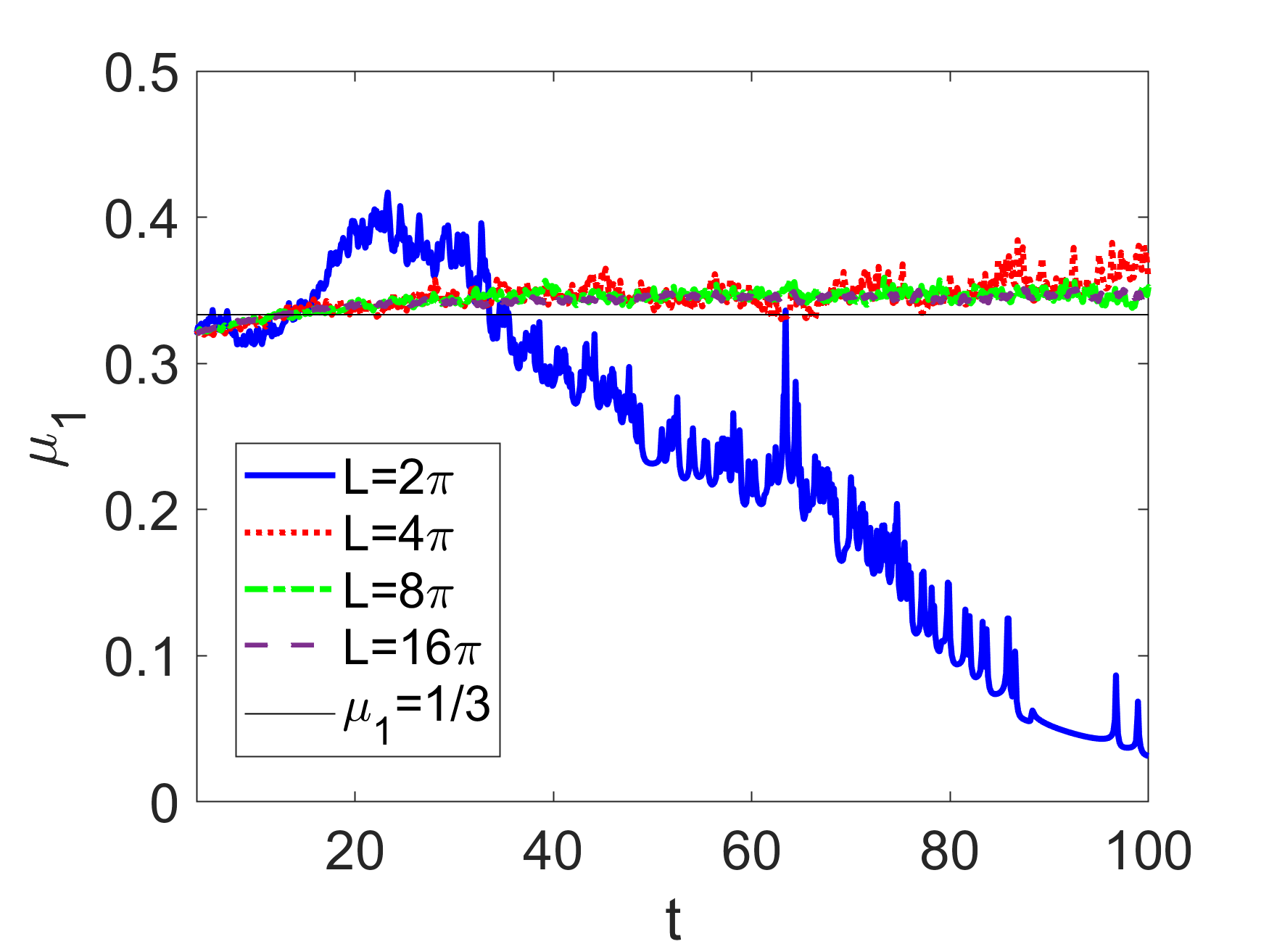}}
	\subfigure[\,\,\text{Variance}]{\includegraphics[width=0.45\textwidth]{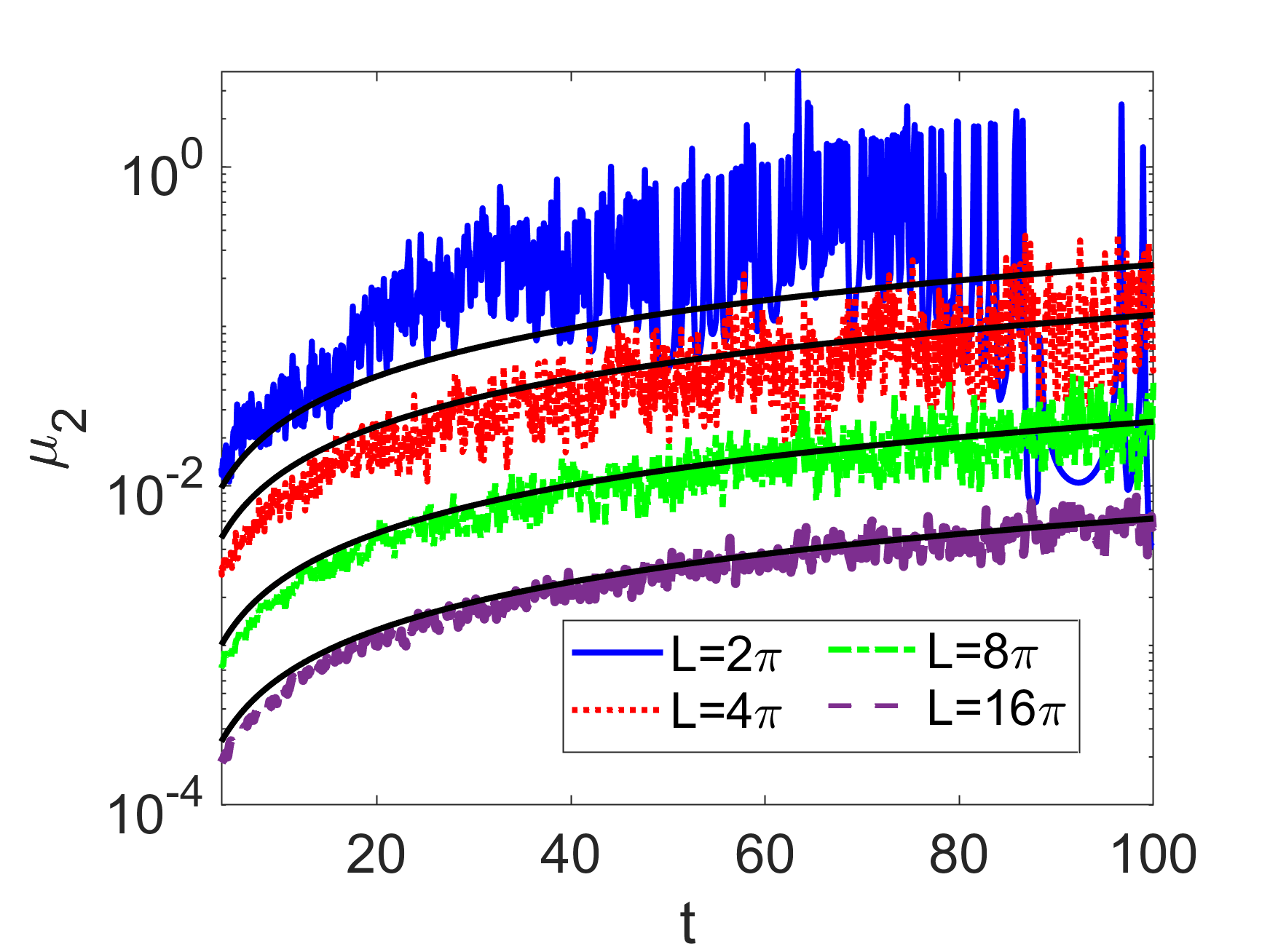}}
	\caption{Plots showing time series of  $\mu_1$ and $\mu_2$ (symmetric mixtures).  Moving averages are shown to guide the eye.  The plot in (b) uses a semilog scale to capture the extent of the variance in $\mu_2$, especially for the $L=2\pi$ case.    Least-squares fits are also shown in (b), with $\mu_2(t)=kt$.  This is to highlight the systematic drift in $\mu_2(t)$, and to guide the eye.}
	\label{fig:moments_CH_symmetric}
\end{figure}
The statistical moments are generated and the results  are shown in time-series form in Figure~\ref{fig:moments_CH_symmetric}.
Extreme finite-size effects are in evidence for the case $L=2\pi$ -- this case is discarded and is no longer considered here.
Otherwise, $\beta(t)$ is seen to fluctuate around a constant value $\beta_0$, and the amplitude of the fluctuations increases over time.  The amplitude of the fluctuations again seen to increase linearly in time, such that $\mathbb{E}(\delta\beta^2)\propto t$, in accordance to the asymmetric case.  The constant of proportionality is again estimated by least-squares fitting of $\mu_2(t)$ to a trend-line $\mu_2=kt$.  The results of applying this statistical model to the data are summarized in Table~\ref{tab:stats_CH_symmetric}.
\begin{table}
\centering
\begin{tabular}{|c|c|c|} 
\hline
    {$L$} & $\frac{1}{80}\int_{80}^{100}\mu_1(t)\,\mathd t$ & $k$ \\
    \hline
		\hline
    $2\pi$  & N/A       & N/A  	\\
    $4\pi$  & 0.356      & 0.0011   \\
    $8\pi$  & 0.346      & 0.00025 	\\
    $16\pi$ & 0.346      & 0.00062		\\
		\hline
\end{tabular}
\caption{Estimates of $\mathbb{E}(\beta)=\betaconst$ and $\mathbb{E}(\delta\beta^2)$ for Model 2 (symmetric mixtures), for various problem sizes $L$.  The value of $k$ is obtained from least-squares fitting on the data between $t=20$ and $t=100$.}
\label{tab:stats_CH_symmetric}
\end{table}
The time-averaged value of $\mu_1$ (a proxy for $\beta_0=\mathbb{E}(\beta)$) is seen to exceed $1/3$.  The robustness of this result to changes in grid refinement and the sampling window  is explored below.  Meanwhile, we emphasize that the variance parameter $k$  decreases as $k\sim L^{-2}$ for the considered cases in Table~\ref{tab:stats_CH_symmetric} -- i.e. the same trend as for asymmetric mixtures.
We lastly look at the probability distribution function of $\beta$ for symmetric mixtures and the Cahn--Hilliard equation --  this is done in Figure~\ref{fig:hist_symmetric}, for the case $L=16\pi$.
There are some similarities between this probability distribution function for symmetric mixtures and the corresponding distribution for asymmetric mixtures (Figure~\ref{fig:hist_asymmetric}).  Both have a well-defined peak and a broad positive tail, which increases in width over time.  

\begin{figure}[htb]
    \centering
     \includegraphics[width=0.6\textwidth]{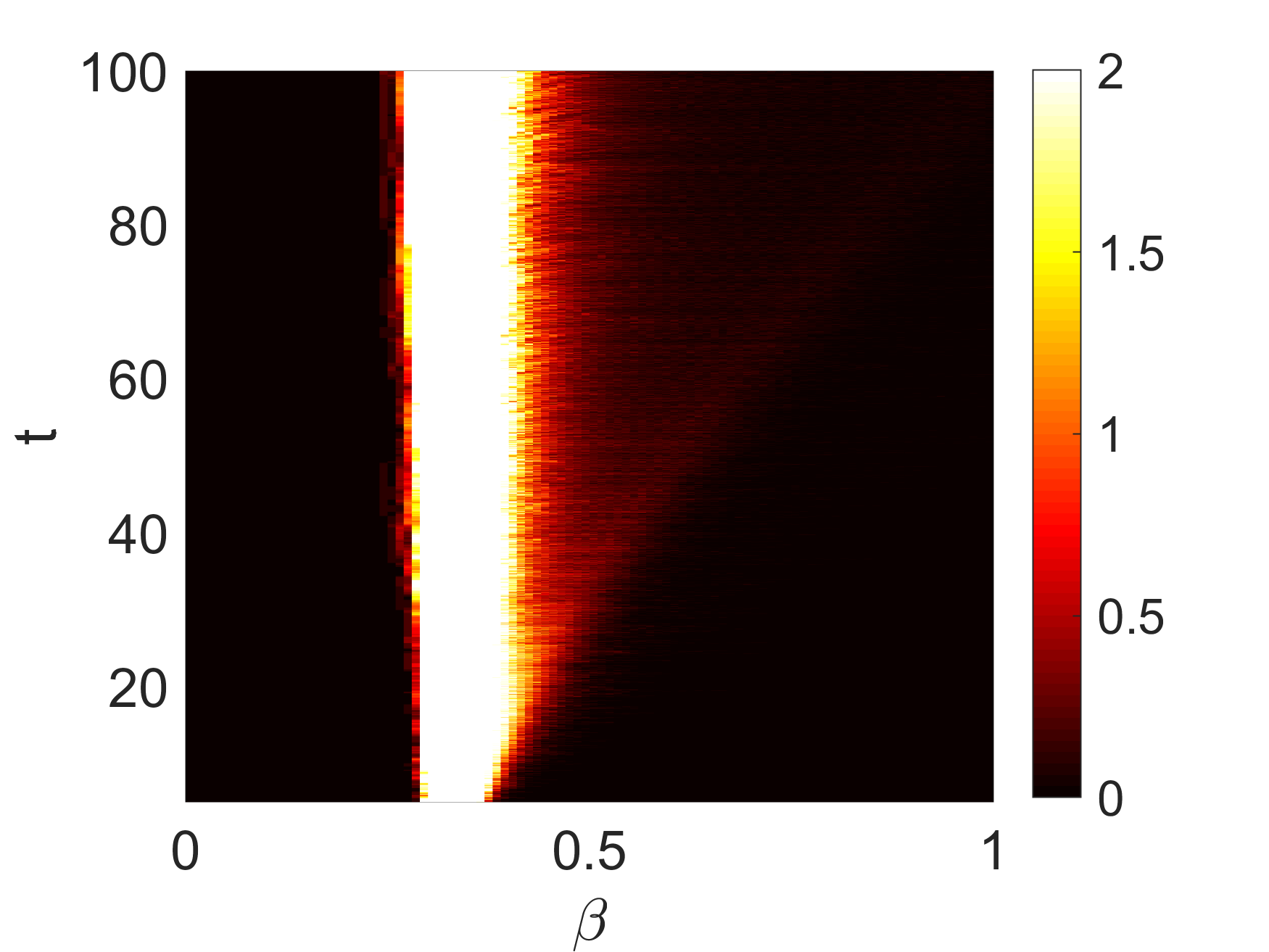}
				\caption{Spacetime plot of the histogram of $\beta$ for symmetric mixtures and the Cahn--Hilliard equation, for the case $L=16\pi$.     }
    \label{fig:hist_symmetric}
\end{figure}

\subsection{Discussion}

In this section we discuss in detail the measured values of the coarsening rates for both symmetric and asymmetric mixtures.    We start by looking at the robustness of the measured values of the coarsening rate for the case of the symmetric mixtures.
In order to assess the robustness of the numerical estimates of $\beta_0$ to variations in both the grid size and the sampling window, we tabulate the numerical values of $(t_2-t_1)^{-1}\int_{t_1}^{t_2}\mu_1(t)\mathd t$ for different time intervals $[t_1,t_2]$ and different grid sizes $n$ for the special case $L=4\pi$.  This is done in Figure~\ref{fig:refinement}.
%
%
%
%
%
%
There is little-or-no variation in the estimated values of $\beta_0$ as the sampling window and the spatial resolution $n$ are varied, with $n$ between $512$ and $1024$.  In particular, the estimated value of $\beta_0$ always exceeds $1/3$.  There is no systematic trend indicating $\beta_0\rightarrow 1/3$ as $n\rightarrow\infty$.  Therefore, the numerical evidence would suggest that the pointwise bound in Equation~\eqref{eq:kohn_maybe} does not hold.  However, since the estimated value of $\beta_0$ is close to $1/3$, the possibility that $\beta_0=1/3$ for very large systems and at very late times cannot be ruled out.  Therefore, it can still be said that the numerical evidence is just about consistent with the possibility that $\beta_0=1/3$ for very large systems at at very late times.

\begin{figure}[H]
	\centering
		\includegraphics[width=0.9\textwidth]{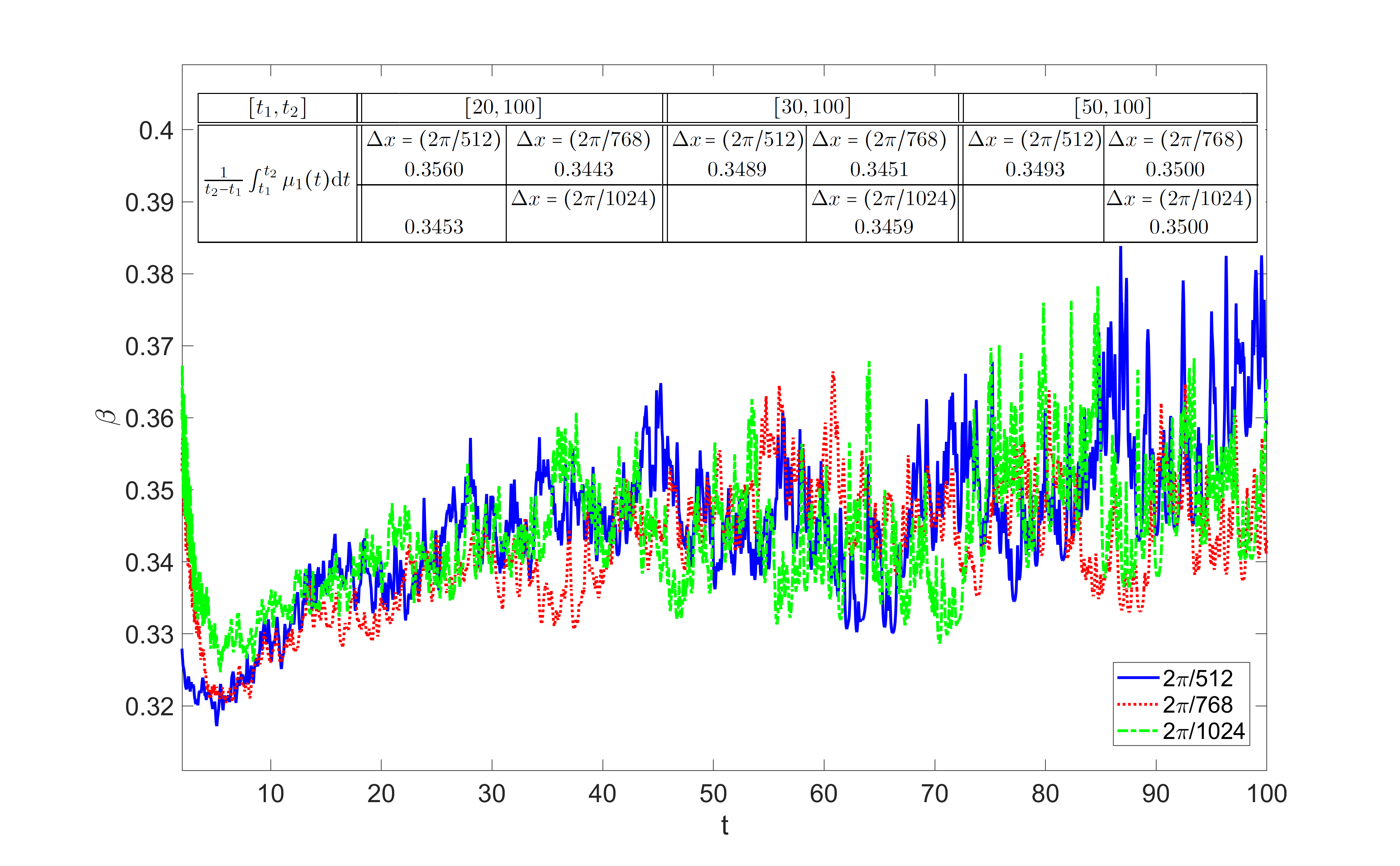}
		\caption{Sensitivity analysis for a fixed domain size $L=4\pi$.  The table shows the robustness of the estimated value of $\beta_0$ to changes in the grid resolution $\Delta x$ and the sampling window $[t_1,t_2]$. }
	\label{fig:refinement}
\end{figure}

We also compare the measured coarsening rates for asymmetric mixtures ($\beta_0\approx 0.28$), and for symmetric mixtures ($\beta_0\approx 0.35$).  There is a significant difference between these values.  Indeed, as the asymmetric mixture is a close approximation to the idealized model of Ostwald Ripening considered in Section~\ref{sec:drop_pop}, one would expect the coarsening rate to be closer to $1/3$ for the asymmetric mixture.  This discrepancy may be due to the fact that simulation time (while extending to $t=100$) may not be long enough  to capture true late-stage coarsening.  Previous numerical simulations looking at the onset of late-stage coarsening have shown that the scaling behaviour (where the typical droplet size grows unambiguously as a power law) is considerably delayed (by orders of magnitude) for the asymmetric case, compared to the symmetric case (e.g. Reference~\cite{garcke2003transient}).   Similarly, longer simulation times for the symmetric case may reduce the measured values of $\beta_0$, thereby possibly bringing them down and closer to $1/3$.

It should be emphasized however that in simulations, one is constantly constrained by two competing effects: the need to extend the simulations out to late times to capture the power-law behaviour of the coarsening, and the need at the same time to avoid finite-size effects which appear at precisely such late times.  The simulation times chosen in this work reflect an attempt to balance these competing effects, and in any case produce results that are consistent with the pointwise bound~\eqref{eq:kohn_maybe}.

\section{Conclusions}
\label{sec:conc}

In this work, we have looked at the coarsening rate of droplets undergoing Ostwald Ripening in two contexts: a discrete droplet-poulation model (`Model 1'), and the Cahn--Hilliard model (`Model 1').  The classical LSW theory emerges from Model 1 in the limit as $N$, the number of droplets initially present, tends to infinity.
 We have quantified the coarsening rate by reference to $\beta$, the growth rate $\beta=-(t/F)(\mathd F/\mathd t)$, where $F$ is the free energy of the droplet system.  We have addressed the following questions:

\begin{enumerate}
\item Is it sensible even to define a probability distribution function for $\beta$? 

In LSW theory we have demonstrated that this is sensible: the probability distribution function for $\beta$ (a property of the entire system and not just of a single droplet) is sharp and equal to $\delta(\beta-(1/3))$.  
\end{enumerate}

\noindent This then justifies the formulation of an analogous probability distribution function for Model 1, and hence:

\begin{enumerate}[resume]
\item  What is  the probability distribution function for $\beta$ in Model 1?  How do finite-size effects (parametrized by $N$, the number of droplets initially present) alter the shape of the distribution?  

In this case, the probability distribution function is no longer sharp, or indeed, statistically stationary.  Here, the probability distribution function is constructed from an ensemble of numerical simulations; the evidence from this ensemble  is that the probability distribution function in this case is broad, and that the variance increases systematically over time.  However, the variance also decreases linearly with increases in $N$, the number of droplets initially present in a given simulation.

\end{enumerate}

\noindent This has motivated us to apply the same approach to Model 2, and hence:

\begin{enumerate}[resume]
\item  What is the probability distribution function for $\beta$ in Model 2?  How do finite-size effects (parametrized by $|\Omega|$) alter the shape of the distribution?  

Here, we again construct the probability distribution function of $\beta$ from an ensemble of numerical simulations: the evidence again shows that the probability distribution function for $\beta$ is not sharp or statistically stationary.  The probability distribution function  is broad, and the variance increases systematically over time.  However, the variance also decreases linearly with increases in $|\Omega|=L^2$, the size of the computational domain.
\end{enumerate}

In each of the models considered, the numerical evidence is further consistent with a coarsening rate that satisfies Equation~\eqref{eq:kohn_maybe}, i.e. $L(t)\leq (\mathrm{Const.})t^{1/3}$, where $L(t)=|\Omega|/[\text{Total Interfacial Area at time }t]$; the total interfacial area may in turn be identified (up to a prefactor) with the free energy of the droplet system.   The evidence for this bound for the case of Model 2 and symmetric mixtures can be queried: the simulations are hampered by the need to run simulations for very long times to capture the true late-stage coarsening behaviour and by the competing need to stop simulations before extreme finite-size effects become a problem.  The order of accuracy of the numerical method for the Cahn--Hilliard equation could also be improved (e.g. Appendix~\ref{sec:app:cvg}).  However, the present approach establishes a framework for further numerical investigations of the existence or otherwise of the pointwise bound $L(t)\leq (\mathrm{Const.})t^{1/3}$.

\subsection*{Acknowledgements}

This work has been produced as part of ongoing work within the ThermaSMART network. The ThermaSMART network has received
funding from the European Union’s Horizon 2020 research and innovation programme under the Marie Sklodowska--Curie grant
agreement No. 778104.  AG and L\'ON  gratefully acknowledge the support of NVIDIA Corporation with the donation of the Titan X Pascal GPUs used for this research.   AG acknowledges funding received from the UCD Research Demonstratorship.  
L\'ON acknowledges helpful discussions with Neil O'Connell.

\appendix

\section{Convergence of the Finite-Difference Method}
\label{sec:app:cvg}

In this Appendix we look at the convergence of the finite-difference code used to simulate the Cahn--Hilliard equation in Section~\ref{sec:CH}.  The code is the based on a semi-implicit Alternating Direction Implicit (ADI) finite-difference method, with periodic boundary conditions in both spatial directions.    The computational grid has $n$ gridpoints in each spatial direction, and the time-step is denoted by $\Delta t$.
The size of the physical domain is $L$, such that $|\Omega|=L^2$.
The numerical method is described in detail in References~\cite{gloster2019cupentbatch,gloster2019custen} -- we present a further convergence study here for completeness.

Since analytical solutions of the Cahn--Hilliard equation in two dimensions are difficult to come by, we resort to a numerical benchmark.  The numerical benchmark methodology adopted herein is that presented in Reference~\cite{ding2007diffuse}.  This benchmark is performed by successively refining grids with the same initial condition and then comparing solutions at matching points.  Matching points is easily achieved by repeatedly doubling the total number of points in the domain.  Thus, the quantity we  are looking to compute is given by:
\begin{multline}
E_n=\frac{1}{|\Omega|}\int_{\Omega}|C_n(\vecx,t)-C_{n/2}(\vecx,t)|\mathd^2x\\
\approx
\frac{1}{|\Omega|}\sum_{i=1}^{n/2}\sum_{j=1}^{n/2}\big| 
\tfrac{1}{4}\left[C_n(\vecx_{2i-1,2j-1},t)+
C_n(\vecx_{2i-1,2j},t)+
C_n(\vecx_{2i,2j-1},t)+
C_n(\vecx_{2i,2j},t)\right]\\
-
C_{n/2}(\vecx_{i,j},t)\big|\Delta x_{n/2}^2,
\end{multline}
where $C_n$ denotes the discretized Cahn--Hilliard concentration field on an $n\times n$ grid and $C_{n/2}$ denotes the same concentration field on an $(n/2)\times (n/2)$ grid.  Here, we assuming that there is an equal grid spacing in both spatial directions.  Hence $\Delta x_n$ denotes the uniform grid spacing on the $n\times n$ grid and 
$\Delta x_{n/2}$ denotes the uniform grid spacing on the $(n/2)\times (n/2)$ grid.  Hence finally, $\Delta x_{n/2}=2\Delta x_n$

For the purpose of the convergence study we work on a domain $\Omega=[0,2\pi]^2$; the initial condition is set in this context as
\[
C(\vecx,t=0)=\epsilon\tanh(r-\pi),\qquad \vecx=(x,y),\qquad r=\sqrt{x^2+y^2}.
\]
In this way, we simulate the Ostwald ripening of a single droplet.
We take $\epsilon=10^{-6}$, and $\gamma=0.01$.  The final simulation time is $T=10$.  We choose a time step of $\Delta t=0.1\Delta x$, where $\Delta x$ is the uniform grid spacing in the $x$ and $y$-direction for a particular number of gridpoints $n$.  

We can see from the results presented in Figure~\ref{fig:cvg} and Table~\ref{tab:cvg} that the considered numerical scheme does indeed converge and is a first-order accurate scheme.
\begin{figure}[htb]
	\centering
		\includegraphics[width=0.6\textwidth]{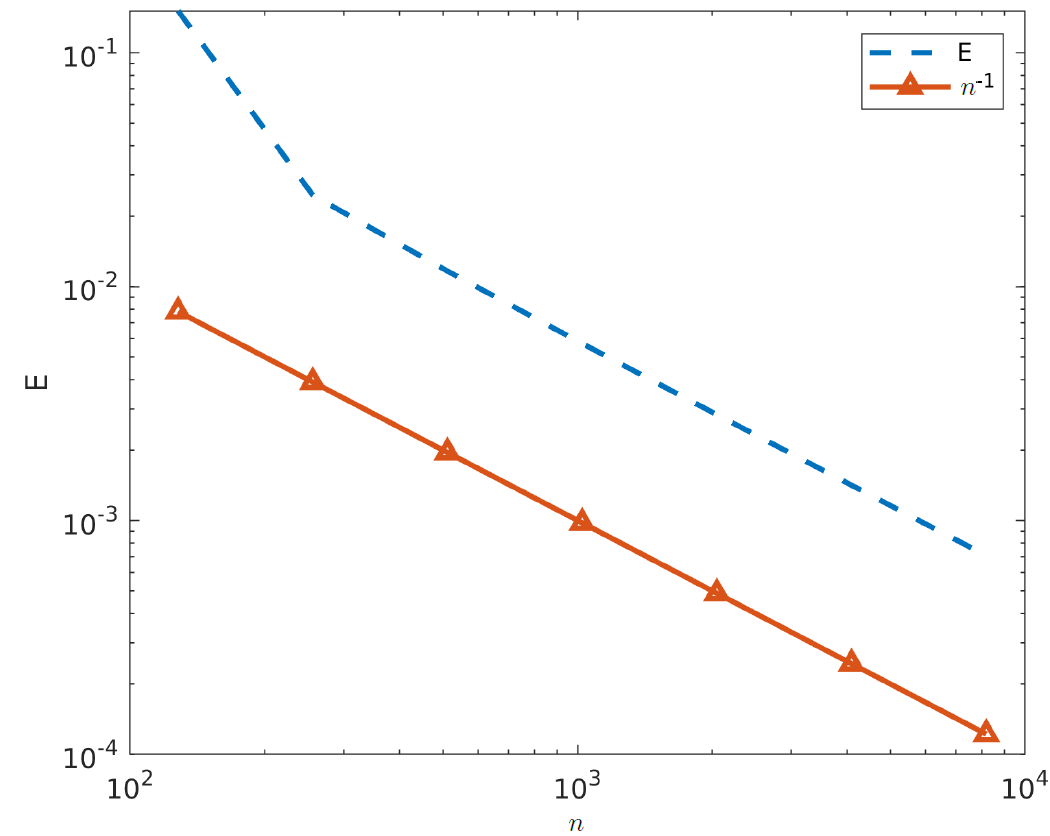}
		\caption{Plot showing the convergence of the ADI numerical scheme for the Cahn--Hilliard equation}
	\label{fig:cvg}
\end{figure}
Another version of the scheme which solves the hyper-diffusion equation has been shown to be second-order accurate, but the addition of the non-linear Cahn--Hilliard term reduces toe scheme to first-order accuracy.
\begin{table}
	\centering
		\begin{tabular}{c|c|c} 
		$n$  & $E_n$  & $\mathrm{log}_2(E_n/E_{2n})$ \\
		\hline
		128  & 0.1510 & 2.6197  \\
		256  & 0.0246 & 1.0785  \\
		512  & 0.0116 & 1.0281  \\
		1024 & 0.0057 & 1.0107 \\
		2048 & 0.0028 & 1.0022 \\
		4096 & 0.0014 & 0.9996\\
		9192 & 0.0007 &       \\
		\end{tabular}
		\caption{Table showing rates of convergence for simulation of the 2D Cahn--Hilliard equation using the ADI numerical method}
		\label{tab:cvg}
\end{table}

\section*{References}

\bibliographystyle{unsrt}

\begin{thebibliography}{10}

\bibitem{CH_orig}
J.~W. Cahn and J.~E. Hilliard.
\newblock Free energy of a nonuniform system. i. interfacial energy.
\newblock {\em J. Chem. Phys}, 28:258--267, 1957.

\bibitem{cabral2018spinodal}
Jo{\~a}o~T Cabral and Julia~S Higgins.
\newblock Spinodal nanostructures in polymer blends: On the validity of the
  cahn-hilliard length scale prediction.
\newblock {\em Progress in Polymer Science}, 81:1--21, 2018.

\bibitem{ding2007diffuse}
Hang Ding, Peter~DM Spelt, and Chang Shu.
\newblock Diffuse interface model for incompressible two-phase flows with large
  density ratios.
\newblock {\em Journal of Computational Physics}, 226(2):2078--2095, 2007.

\bibitem{tomo2018unexpected}
Yoko Tomo, Qin-Yi Li, Tatsuya Ikuta, Yasuyuki Takata, and Koji Takahashi.
\newblock Unexpected homogeneous bubble nucleation near a solid--liquid
  interface.
\newblock {\em The Journal of Physical Chemistry C}, 122(50):28712--28716,
  2018.

\bibitem{spectralCahn}
Jingzhi Zhu, Long-Qing Chen, Jie Shen, and Veena Tikare.
\newblock Coarsening kinetics from a variable-mobility cahn-hilliard equation:
  Application of a semi-implicit fourier spectral method.
\newblock {\em Phys. Rev. E}, 60:3564--3572, 1999.

\bibitem{LennonPRE2007}
Lennon {\'O N\'araigh} and Jean-Luc Thiffeault.
\newblock Bubbles and filaments: Stirring a cahn-hilliard fluid.
\newblock {\em Phys. Rev. E}, 75:016216, 2007.

\bibitem{kohn2002upper}
Robert~V Kohn and Felix Otto.
\newblock Upper bounds on coarsening rates.
\newblock {\em Communications in mathematical physics}, 229(3):375--395, 2002.

\bibitem{baldan2002review}
A~Baldan.
\newblock Review progress in ostwald ripening theories and their applications
  to nickel-base superalloys part i: Ostwald ripening theories.
\newblock {\em Journal of materials science}, 37(11):2171--2202, 2002.

\bibitem{lifshitz1961kinetics}
Ilya~M Lifshitz and Vitaly~V Slyozov.
\newblock The kinetics of precipitation from supersaturated solid solutions.
\newblock {\em Journal of physics and chemistry of solids}, 19(1-2):35--50,
  1961.

\bibitem{wagner1961}
Carl Wagner.
\newblock Theorie der alterung von niederschlägen durch umlösen
  (ostwald-reifung).
\newblock {\em Zeitschrift für Elektrochemie, Berichte der Bunsengesellschaft
  für physikalische Chemie}, 65(7-8):581--591, 1961.

\bibitem{pego1989}
Robert~L Pego.
\newblock Front migration in the nonlinear cahn-hilliard equation.
\newblock {\em Proceedings of the Royal Society of London. A. Mathematical and
  Physical Sciences}, 422(1863):261--278, 1989.

\bibitem{mullins1963morphological}
William~W Mullins and Robert~F Sekerka.
\newblock Morphological stability of a particle growing by diffusion or heat
  flow.
\newblock {\em Journal of applied physics}, 34(2):323--329, 1963.

\bibitem{niethammer2008effective}
Barbara Niethammer.
\newblock Effective theories for ostwald ripening.
\newblock {\em Analysis and stochastics of growth processes and interface
  models}, page 223, 2008.

\bibitem{bray2002theory}
Alan~J Bray.
\newblock Theory of phase-ordering kinetics.
\newblock {\em Advances in Physics}, 51(2):481--587, 2002.

\bibitem{giron1998weak}
Boaz Giron, Baruch Meerson, and Pavel~V Sasorov.
\newblock Weak selection and stability of localized distributions in ostwald
  ripening.
\newblock {\em Physical Review E}, 58(4):4213, 1998.

\bibitem{niethammer1999non}
Barbara Niethammer and Robert~L Pego.
\newblock Non-self-similar behavior in the lsw theory of ostwald ripening.
\newblock {\em Journal of statistical physics}, 95(5-6):867--902, 1999.

\bibitem{mielke2006analysis}
Alexander Mielke and Alexander Mielke.
\newblock {\em Analysis, modeling and simulation of multiscale problems},
  volume~1.
\newblock Springer, 2006.

\bibitem{yao1993theory}
Jian~Hua Yao, KR~Elder, Hong Guo, and Martin Grant.
\newblock Theory and simulation of ostwald ripening.
\newblock {\em Physical review B}, 47(21):14110, 1993.

\bibitem{rogers1989numerical}
TM~Rogers and Rashmi~C Desai.
\newblock Numerical study of late-stage coarsening for off-critical quenches in
  the cahn-hilliard equation of phase separation.
\newblock {\em Physical Review B}, 39(16):11956, 1989.

\bibitem{gloster2019cupentbatch}
Andrew Gloster, Lennon~{\'O} N{\'a}raigh, and Khang~Ee Pang.
\newblock cupentbatch: A batched pentadiagonal solver for nvidia gpus.
\newblock {\em Computer Physics Communications}, 241:113--121, 2019.

\bibitem{gloster2019custen}
Andrew Gloster and Lennon O'Naraigh.
\newblock custen--cuda finite difference and stencil library.
\newblock {\em Software X}, 10:100337, 2019.

\bibitem{garcke2003transient}
Harald Garcke, Barbara Niethammer, Martin Rumpf, and Ulrich Weikard.
\newblock Transient coarsening behaviour in the cahn--hilliard model.
\newblock {\em Acta Materialia}, 51(10):2823--2830, 2003.

\end{thebibliography}

\end{document}